\title{Gains in Power from Structured Two-Sample Tests of Means on Graphs}
\author{
Laurent Jacob\\
Department of Statistics\\
University of California, Berkeley\\
\texttt{laurent@stat.berkeley.edu}\\
\and
Pierre Neuvial\\
Department of Statistics\\
University of California, Berkeley\\
\texttt{pierre@stat.berkeley.edu} \\
\and
Sandrine Dudoit \\
Division of Biostatistics and Department of Statistics\\
University of California, Berkeley\\
\texttt{sandrine@stat.berkeley.edu}\\
}
\newcommand{\OMIT}[1]{}
\newtheorem{lemma}{Lemma}
\newtheorem{cor}{Corollary}
\newcommand{\1}{\mathbf{1}}
\newcommand{\E}{\mathcal{E}}
\newcommand{\G}{\mathcal{G}}
\newcommand{\HZ}{\mathbf{H}_0}
\newcommand{\HO}{\mathbf{H}_1}
\newcommand{\LL}{\mathcal{L}}
\newcommand{\N}{\mathcal{N}}
\newcommand{\RR}{\mathbb{R}}
\newcommand{\V}{\mathcal{V}}
\newcommand{\cF}{F_0}  
\newcommand{\ncF}{F}   
\begin{document}

\maketitle


\begin{abstract}
  We consider multivariate two-sample tests of means, where the
  location shift between the two populations is expected to be related
  to a known graph structure.  An important application of such tests
  is the detection of differentially expressed genes between two
  patient populations, as shifts in expression levels are expected to
  be coherent with the structure of graphs reflecting gene properties
  such as biological process, molecular function, regulation, or
  metabolism. For a fixed graph of interest, we demonstrate that
  accounting for graph structure can yield more powerful tests under
  the assumption of smooth distribution shift on the graph. We also
  investigate the identification of non-homogeneous subgraphs of a
  given large graph, which poses both computational and multiple
  testing problems. The relevance and benefits of the proposed
  approach are illustrated on synthetic data and on breast cancer gene
  expression data analyzed in context of KEGG pathways.
\end{abstract}

\section{Introduction}

\todo{Clearly distinguish mean/location shift $\delta$
  vs. distribution/covariance-adjusted mean shift $\Delta$.}

The problem of testing whether two data generating distributions are
equal has been studied extensively in the statistical and machine
learning literatures. Practical applications range from speech
recognition to fMRI and genomic data analysis. Parametric approaches
typically test for divergence between two distributions using
statistics based on a standardized difference of the two sample means,
\emph{e.g.}, Student's $t$-statistic in the univariate case or
Hotelling's $T^2$-statistic in the multivariate
case~\citep{Lehmann2005Testing}. A variety of non-parametric
rank-based tests have also been proposed. More recently,
\cite{Harchaoui2008Testing} and~\cite{Gretton2007A} devised
kernel-based statistics for homogeneity tests in a function space.

In several settings of interest, prior information on the structure of
the distribution shift is available as a graph on the
variables. Specifically, suppose we observe
$\{X_1^1,\ldots,X_{n_1}^1\}\in\RR^p$ from a first multivariate normal
distribution $\N(\mu_1,\Sigma)$ and
$\{X_1^2,\ldots,X_{n_2}^2\}\in\RR^p$ from a second such distribution
$\N(\mu_2,\Sigma)$.  In cases where an undirected graph $\G =
\left(\V,\E\right)$ encoding some type of covariance information in
$\RR^p$ is given, the putative \emph{location} or \emph{mean shift}
$\delta=\mu_1-\mu_2$ may be expected to be coherent with $\G$. That
is, $\delta$ viewed as a function of $\G$ is \emph{smooth}, in the
sense that the shifts $\delta_i$ and $\delta_j$ for two connected
nodes $v_i$ and $v_j\in\V$ are similar. Classical tests, such as
Hotelling's $T^2$-test, consider the null hypothesis
$\HZ~:\mu_1=\mu_2$ against the alternative $\HO~:\mu_1\neq\mu_2$,
without reference to the graph.  Our goal is to take into account the
graph structure of the variables in order to build a more powerful
two-sample test of means under smooth-shift alternatives.

Just as a natural notion of smoothness of functions on a Euclidean
space can be defined through the notion of Dirichlet energy and
controlled by Fourier decomposition and
filtering~\citep{Stain1971Introduction}, it is
well-known~\citep{Chung1997Spectral} that the smoothness of functions
on a graph can be naturally defined and controlled through spectral
analysis of the graph Laplacian.  In particular, the eigenvectors of
the Laplacian provide a basis of functions which vary on the graph at
increasing frequencies (corresponding to the increasing
eigenvalues). In this paper, we propose to compare two populations in
terms of the first few components of the graph-Fourier basis or,
equivalently, in the original space, after filtering out
high-frequency components.


An important motivation for the development of our graph-structured
test is the detection of groups of genes whose expression changes
between two conditions. For example, identifying groups of genes that
are differentially expressed (DE) between patients for which a
particular treatment is effective and patients which are resistant to
the treatment may give insight into the resistance mechanism and even
suggest targets for new drugs.  In such a context, expression data
from high-throughput microarray and sequencing assays gain much in
relevance from their association with graph-structured prior
information on the genes, \emph{e.g.}, Gene Ontology (GO;
\url{http://www.geneontology.org}) or Kyoto Encyclopedia of Genes and
Genomes (KEGG; \url{http://www.genome.jp/kegg}).  Most approaches to
the joint analysis of gene expression data and gene graph data involve
two distinct steps. Firstly, tests of differential expression are
performed separately for each gene. Then, these univariate
(gene-level) testing results are extended to the level of gene sets,
\emph{e.g.}, by assessing the over-representation of DE genes in each
set based on $p$-values for Fisher's exact test (or a $\chi^2$
approximation thereof) adjusted for multiple
testing~\citep{Beissbarth2004GOstat} or based on permutation adjusted
$p$-values for weighted Kolmogorov-Smirnov-like
statistics~\citep{Subramanian2005Gene}. Another family of methods
directly performs multivariate tests of differential expression for
groups of genes, \emph{e.g.}, Hotelling's
$T^2$-test~\citep{Lu2005Hotelling}. It is
known~\citep{Goeman2007Analyzing} that the former family of approaches
can lead to incorrect interpretations, as the sampling units for the
tests in the second step become the genes (as opposed to the patients)
and these are expected to have strongly correlated expression
measures. This suggests that direct multivariate testing of gene set
differential expression is more appropriate than posterior aggregation
of individual gene-level tests. On the other hand, while Hotelling's
$T^2$-statistic is known to perform well in small dimensions, it loses
power very quickly with increasing dimension~\citep{Bai1996Effect},
essentially because it is based on the inverse of the empirical
covariance matrix which becomes ill-conditioned. In addition, such
direct multivariate tests on unstructured gene sets do not take
advantage of information on gene regulation or other relevant
biological properties.  An increasing number of regulation networks
are becoming available, specifying, for example, which genes activate
or inhibit the expression of which other genes.  As stated before,
incorporating such biological knowledge in DE tests is
important. Indeed, if it is known that a particular gene in a tested
gene set activates the expression of another, then one expects the two
genes to have coherent (differential) expression patterns,
\emph{e.g.}, higher expression of the first gene in resistant patients
should be accompanied by higher expression of the second gene in these
patients.  Accordingly, the first main contribution of this paper is
to propose and validate multivariate test statistics for identifying
distribution shifts that are coherent with a given graph structure.

Next, given a large graph and observations from two data generating
distributions on the graph, a more general problem is the
identification of smaller non-homogeneous subgraphs, \emph{i.e.},
subgraphs on which the two distributions (restricted to these
subgraphs) are significantly different. This is very relevant in the
context of tests for gene set differential expression: given a large
set of genes, together with their known regulation network, or the
concatenation of several such overlapping sets, it is important to
discover novel gene sets whose expression change significantly between
two conditions.  Currently-available gene sets have often been defined
in terms of other phenomena than that under study and physicians may
be interested in discovering sets of genes affecting in a concerted
manner a specific phenotype.  Our second main contribution is
therefore to develop algorithms that allow the exhaustive testing of
all the subgraphs of a large graph, while accounting for the
multiplicity issue arising from the vast number of subgraphs.


As the problem of identifying variables or groups of variables which
differ in distribution between two populations is closely-related to
supervised learning, our proposed approach is similar to several
learning methods. \cite{Rapaport2007Classification} use filtering in
the Fourier space of a graph to train linear classifiers of gene
expression profiles whose weights are smooth on a gene
network. However, their classifier enforces global smoothness on the
large regularization network of all the genes, whereas we are
concerned with the selection of gene sets with locally-smooth
expression shift between populations. In~\cite{Jacob2009Group}, sparse
learning methods are used to build a classifier based on a small
number of gene sets. While this approach leads in practice to the
selection of groups of variables whose distributions differ between
the two classes, the objective is to achieve the best classification
performance with the smallest possible number of groups. As a result,
correlated groups of variables are typically not selected. Other
related work includes~\cite{Fan1998Test}, who proposed an adaptive
Neyman test in the Fourier space for time-series. However, as
illustrated below in Section~\ref{sec:experiments}, direct translation
of the adaptive Neyman statistic to the graph case is problematic, as
assumptions on Fourier coefficients which are true for time-series do
not hold for graphs.  In addition, the Neyman statistic converges very
slowly towards its asymptotic distribution and the required
calibration by bootstrapping renders its application to our subgraph
discovery context difficult. By contrast, other methods do not account
for shift smoothness and try to address the loss of power caused by
the poor conditioning of the $T^2$-statistic by applying it after
dimensionality reduction~\citep{Ma2009Identification} or by omitting
the inverse covariance matrix and adjusting instead by its
trace~\citep{Bai1996Effect,Chen2010A}. \cite{Vaske2010Inference}
recently proposed DE tests, where a probabilistic graphical model is
built from a gene network. However, this model is used for gene-level
DE tests, which then have to be combined to test at the level of gene
sets. Several approaches for subgraph discovery, like that
of~\cite{Ideker2002Discovering}, are based on a heuristic to identify
the most differentially expressed subgraphs and do not amount to
testing exactly all the subgraphs. Concerning the discovery of
distribution-shifted subgraphs, \cite{Vandin2010Algorithms} propose a
graph Laplacian-based testing procedure to identify groups of
interacting proteins whose genes contain a large number of
mutations. Their approach does not enforce any smoothness on the
detected patterns (smoothness is not necessarily expected in this
context) and the graph Laplacian is only used to ensure that very
connected genes do not lead to spurious detection. The Gene Expression
Network Analysis (GXNA) method of \cite{Nacuetal07} detects
differentially expressed subgraphs based on a greedy search algorithm
and gene set DE scoring functions that do not account for the graph
structure.


The rest of this paper is organized as follows:
Section~\ref{sec:smooth} introduces elements of Fourier analysis for
graphs which are needed to develop our method. Section~\ref{sec:test}
presents our graph-structured two-sample test statistic and states
results on power gain for smooth-shift
alternatives. Section~\ref{sec:discovery} describes procedures for
systematically testing all the subgraphs of a large graph.
Section~\ref{sec:experiments} presents results for synthetic data as
well as breast cancer gene expression and KEGG data.  Finally,
Section~\ref{sec:discussion} summarizes our findings and outlines
ongoing work.


\section{Fourier analysis on graphs}
\label{sec:smooth}

The fundamental idea of harmonic analysis for functions defined on a
Euclidean space is to build a basis of the function space, such that
each basis function varies at a different frequency. The basis
functions are typically sinusoids. They were originally obtained in an
attempt to solve the heat equation, as the eigenfunctions of the
Laplace operator, with corresponding eigenvalues proportional to the
frequencies of the sinusoids. Any function can then be decomposed on
the basis as a linear combination of sinusoids of increasing
frequency. The set of projections of the function on the basis
sinusoids gives a dual representation of the function, often referred
to as Fourier transform. This representation is useful for filtering
functions, by removing or shrinking coefficients associated with high
frequencies, as these are typically expected to reflect noise, and
then taking the inverse Fourier transform.  The resulting filtered
function contains the same signal in the low frequencies as the
original function. A related concept is the Dirichlet energy of a
function $f$ on an open subspace $\Omega$, defined as
$\frac{1}{2}\int_{\Omega}|\nabla f(x)|^2dx$ where $\nabla$ is the
gradient operator, a measure of variation that is consistent with the
Laplace operator.  In particular, the Dirichlet energy of the basis
functions is proportional to their associated frequencies.

\todo{LJ: I think it can directly be written in terms of the
  laplacian. Defn $d_i$, $f_i$. Defn fct on graph.}

For functions on a Euclidean space, natural notions of smoothness,
along with the Dirichlet energy and dual representation in the
frequency domain by projection on a Fourier basis, are therefore
classically defined from the Laplace operator and its spectral
decomposition.  Likewise, notions of smoothness for functions on
graphs can be defined based on the graph Laplacian.  Specifically,
consider an undirected graph $\G = \left(\V,\E\right)$, with $|\V| =
p$ nodes, adjacency matrix $A$, and degree matrix $D =
\text{Diag}\left(A \1\right)$, where $\1$ is a unit column-vector,
$\text{Diag}(x)$ is the diagonal matrix with diagonal $x$ for any
vector $x$, and $D_{ii} = d_i$. Let $f:\RR^{|\V|}\rightarrow \RR$
denote a function that associates a real value to each node of the
graph $\G$. The Laplacian matrix of $\G$ is typically defined as $\LL
= D - A$ or $\LL_{norm} = I - D^{-\frac{1}{2}}AD^{-\frac{1}{2}}$ for
the normalized version. More generally, given any gradient matrix
$\nabla \in \RR^{|\E|,|\V|}$, defined on $\G$ and associating to each
function on the graph its variation on each edge, it is possible to
derive a corresponding Laplacian matrix following the classical
definition of the Laplace operator, $\LL = -\textrm{div} \nabla =
\nabla^\top \nabla$, where $\textrm{div}$ is the divergence operator
defined as the negative of the adjoint operator of the
gradient~\citep{Zhou2005Regularization}.  Any desired notion of
variation may be encoded in a gradient function and thus translated
into its associated Dirichlet energy $\frac{1}{2}f^\top\LL f$, for a
function $f$ defined on the graph $\G$. A common choice of gradient is
the finite difference operator $\nabla f = \left(f_i -
  f_j\right)_{i,j\in \V}$. This definition leads to the unnormalized
Laplacian above. The corresponding energy function is
$\frac{1}{2}\sum_{i,j\in\V}\left(f_i - f_j\right)^2$.  Let
$\LL=U\Lambda U^\top$ denote the spectral decomposition of the
Laplacian, where $\Lambda$ is the diagonal matrix of eigenvalues
$\lambda_i$ and the columns of the matrix $U$ are the corresponding
eigenvectors $u_i$. Then, by definition, the eigenvectors of $\LL$ are
functions of increasing energy, as $u_i^\top\LL u_i = \lambda_i$ for
all $i=1,\ldots,p$.  In the remainder of this paper, we denote by
$\tilde{f} = U^\top f$ the Fourier coefficients of a function $f$
defined on a graph.

If the above two notions of smoothness are not appropriate for a
particular application, other gradients, leading to other Laplacian
matrices, may be devised to build the function basis. For example,
introducing weights on the edges of a graph and using these weights in
the normalized version of the finite differences allows the
incorporation of prior belief on where a shift in distributions is
expected to be smooth. For applications like structured gene set
differential expression detection, one may use negative weights for
edges that reflect an expected negative correlation between two
variables, \emph{e.g.}, a gene $i$ whose expression inhibits the
expression of another gene $j$. In this case, a small variation of the
shift on the edge between $i$ and $j$ should correspond to a small
$|\delta_i + \delta_j|$. Accordingly, the gradient should be defined
as $\left(f_i - s_{ij} f_j\right)_{i,j\in\V}$, where $s_{ij}$ is $-1$
for negative interactions and $1$ for positive interactions. The
eigenvectors of the corresponding Laplacian $\LL_{\text{sign}}$ are
functions of increasing $\frac{1}{2}\sum_{i,j\in\V}\left(f_i - s_{ij}
  f_j\right)^2$, an appropriate notion of smoothness for the
application at hand. A signed Laplacian can be recovered from the
classical definition $\LL_{\text{sign}} = D - A_{\text{sign}}$, where
$A_{\text{sign}}$ is allowed to have negative entries. Note that such
a smoothness function is used as a penalty for semi-supervised
learning in~\cite{Goldberg2007Dissimilarity}.

As an example, Figure~\ref{fig:ev} displays the eigenvectors of the
signed Laplacian $\LL_{\text{sign}}$ for a simple four-node graph with
$$
D = 
\left (
\begin{array}{cccc}
1 & 0 & 0 & 0\\
0 & 3 & 0 & 0\\
0 & 0 & 1 & 0\\
0 & 0 & 0 & 1
\end{array}
\right ), \quad
A_{\text{sign}} = 
\left (
\begin{array}{cccc}
0 & 1 & 0 & 0\\
1 & 0 & 1 & -1\\
0 & 1 & 0 & 0\\
0 & -1 & 0 & 0
\end{array}
\right ), \quad
\LL_{\text{sign}} = 
\left (
\begin{array}{cccc}
1 & -1 & 0 & 0\\
-1 & 3 & -1 & 1\\
0 & -1 & 1 & 0\\
0 & 1 & 0 & 1
\end{array}
\right ).
$$
The first eigenvector, corresponding to the smallest frequency
(eigenvalue of zero), can be viewed as a ``constant'' function on the
graph, in the sense that its absolute value is identical for all the
nodes, but nodes connected by an edge with negative weight take on
values of opposite in sign.  By contrast, the last eigenvector,
corresponding to the highest frequency, is such that nodes connected
by positive edges take on values of opposite sign and nodes connected
by negative edges take on values of the same sign.

\begin{figure}[h]
  \begin{center}
    \includegraphics[width=0.24\linewidth]{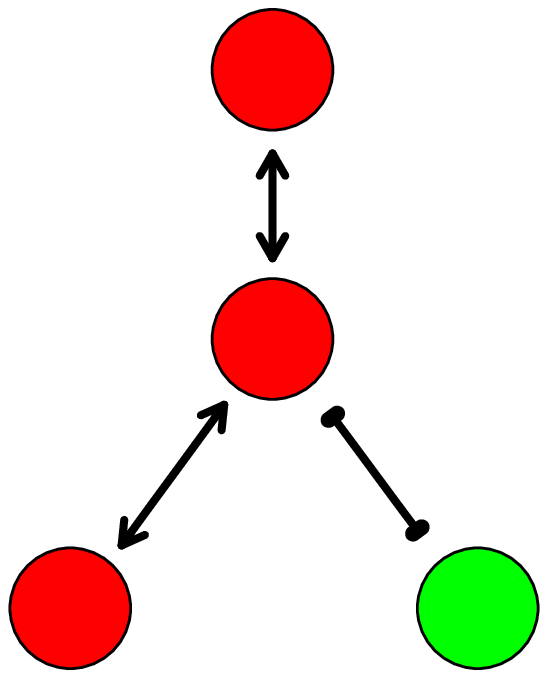}
    \includegraphics[width=0.24\linewidth]{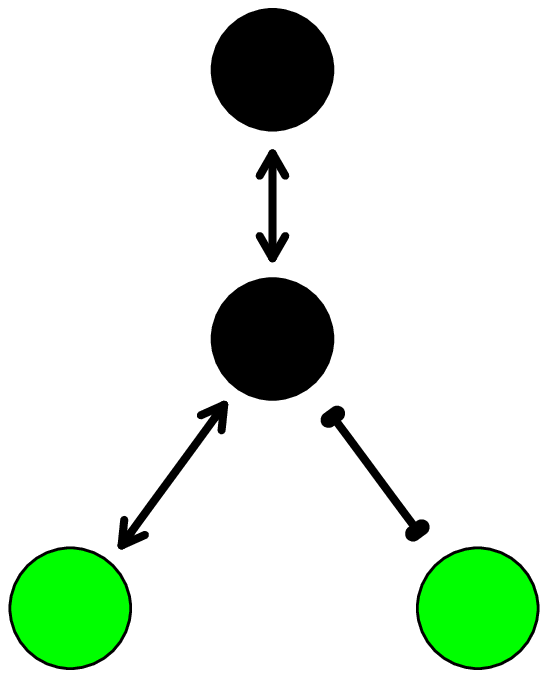}
    \includegraphics[width=0.24\linewidth]{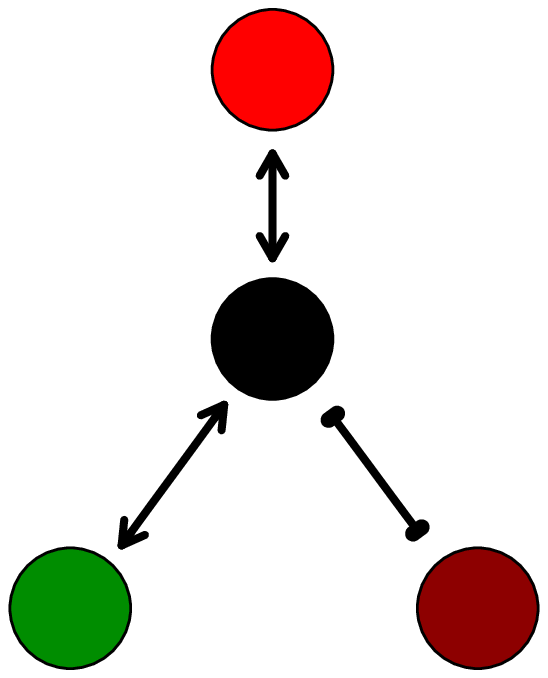}
    \includegraphics[width=0.24\linewidth]{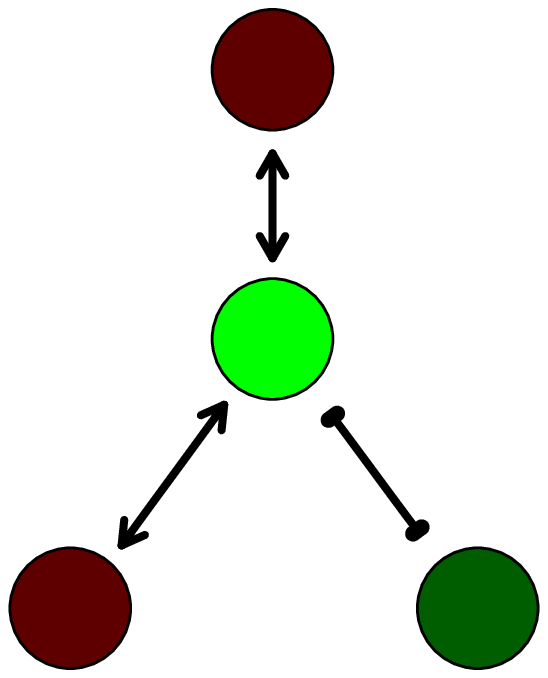}
  \end{center}
  \caption{Eigenvectors of the signed Laplacian $\LL_{\text{sign}}$
    for a simple four-node graph. The corresponding eigenvalues are
    $0, 1, 1, 4$. Nodes are colored according to the value of the
    eigenvector, where green corresponds to high positive values, red
    to high negative values, and black to $0$. ``T''-shaped edges have
    negative weights.}
  \label{fig:ev}
\end{figure}
 \todo{Figure~\ref{fig:ev}: negative interactions are hard to see.  Use a struck (i.e. barr\'ee) edge ?}

\section{Graph-structured two-sample test of means under smooth-shift alternatives}
\label{sec:test}

For multivariate normal distributions, Hotelling's $T^2$-test, a
classical test of location shift, is known to be uniformly most
powerful invariant against global-shift alternatives. The test
statistic is based on the squared \emph{Mahalanobis norm} of the
sample mean shift and is given by $T^2 =
\frac{n_1n_2}{n_1+n_2}(\bar{x}_1 - \bar{x}_2)^\top \hat{\Sigma}^{-1}
(\bar{x}_1 - \bar{x}_2)$, where $n_i$, $\bar{x}_i$, and
$\hat{\Sigma}^{-1}$ denote, respectively, the sample sizes, means, and
pooled covariance matrix, for random samples drawn from two
$p$-dimensional Gaussian distributions, $\N(\mu_i,\Sigma)$,
$i=1,2$. Under the null hypothesis $\HZ~:\mu_1=\mu_2$ of equal means,
$N T^2$ follows a (central) $F$-distribution $\cF(p,n_1+n_2-p-1)$, where
$N = \frac{n_1+n_2-p-1}{(n_1+n_2-2)p}$. In general, $N T^2$ follows a
non-central $F$-distribution $\ncF(\frac{n_1
  n_2}{n_1+n_2}\Delta^2(\delta,\Sigma); p,n_1+n_2-p-1)$, where the
non-centrality parameter is a function of the Mahalanobis norm of the
mean shift $\delta$, $\Delta^2(\delta,\Sigma) = \delta^\top
\Sigma^{-1}\delta$, which we refer to as \emph{distribution shift}. In
the remainder of this paper, unless otherwise specified,
$T^2$-statistics are assumed to follow the nominal $F$-distribution,
\emph{e.g.}, for critical value and power calculations.

For any graph-Fourier basis $U$, direct calculation shows that $T^2 =
\tilde{T}^2 \stackrel{\Delta}{=} \frac{n_1n_2}{n_1+n_2}(\bar{x}_1 -
\bar{x}_2)^\top U \left(U^\top \hat{\Sigma} U\right)^{-1}U^\top
(\bar{x}_1 - \bar{x}_2)$, \emph{i.e.}, the statistic $T^2$ in the
original space and the statistic $\tilde{T}^2$ in the graph-Fourier
space are identical. More generally, for $k \leq p$, the statistic in
the original space after filtering out frequencies above $k$ is the
same as the statistic $\tilde{T}_k^2$ restricted to the first $k$
coefficients in the graph-Fourier space:
\begin{align*}
\tilde{T}_k^2
&\stackrel{\Delta}{=} \frac{n_1n_2}{n_1+n_2}(\bar{x}_1 -
\bar{x}_2)^\top U_{[k]} \left(U_{[k]}^\top \hat{\Sigma}
U_{[k]}\right)^{-1}U_{[k]}^\top (\bar{x}_1 - \bar{x}_2) \\
&= \frac{n_1n_2}{n_1+n_2}(\bar{x}_1 - \bar{x}_2)^\top U 1_{k} U^\top
\left(U 1_{k} U^\top \hat{\Sigma}
  U 1_{k} U^\top\right)^{+}U 1_{k} U^\top (\bar{x}_1 -
\bar{x}_2),
\end{align*}
where $A^{+}$ denotes the generalized inverse of a matrix $A$, the $p \times k$ matrix $U_{[k]}$ denotes the restriction of $U$
to its first $k$ columns, and $1_{k}$ is a $p \times p$ diagonal
matrix, with $i$th diagonal element equal to one if $i \leq k$ and
zero otherwise.  Note that as retaining the first $k$ Fourier
components is a \emph{non-invertible} transformation, this filtering
indeed has an effect on the test statistic, that is, we have
$\tilde{T}_k^2 \neq \tilde{T}^2$ in general.  As the Mahalanobis norm
is invariant to linear invertible transformations, using an invertible
filtering (such as weighting each Fourier component according to its
corresponding eigenvalue) would have no impact on the test statistic.

Hotelling's $T^2$-test is known to behave poorly in high dimension;
the following lemma shows that gains in power can be achieved by
filtering. Specifically, let $\tilde{\delta} = U^{\top}\delta$ and
$\tilde{\Sigma} = U^\top\Sigma U$ denote, respectively, the mean shift
and covariance matrix in the graph-Fourier space.  Given $k \leq p$,
let $\Delta_{k}^2(\delta,\Sigma) =
\delta_{[k]}^\top\left(\Sigma_{[k]}\right)^{-1}\delta_{[k]}$ denote
the distribution shift restricted to the first $k$ dimensions of
$\delta$ and $\Sigma$, \emph{i.e.}, based on only the first $k$
elements of $\delta$, $(\delta_i: i\leq k)$, and the first $k \times
k$ diagonal block of $\Sigma$, $(\sigma_{ij}: i, j \leq k)$. Under the
assumption that the distribution shift is smooth, \emph{i.e.}, lies
mostly at the beginning of the graph spectrum, so that
$\Delta_{k}^2(\tilde{\delta},\tilde{\Sigma})$ is nearly maximal for a
small value of $k$, Lemma~\ref{lem:das} states that performing
Hotelling's test in the graph-Fourier space restricted to its first
$k$ components yields more power than testing in the full
graph-Fourier space. Equivalently, the test is more powerful in the
original space after filtering than in the original unfiltered space.
Note that this result holds because retaining the first $k$ Fourier
components is a \emph{non-invertible} transformation.

\begin{lemma}
  \label{lem:das} For any level $\alpha$ and any $1<l\leq p-k$, there
  exists $d(\alpha,k,l) > 0$ such that
  \begin{displaymath}
   \Delta_{k+l}^2(\tilde{\delta},\tilde{\Sigma}) - \Delta_k^2(\tilde{\delta},\tilde{\Sigma}) <
   d(\alpha,k,l) \Rightarrow \beta_{\alpha,k}(\Delta_k^2(\tilde{\delta},\tilde{\Sigma})) >
   \beta_{\alpha,k+l}(\Delta_{k+l}^2(\tilde{\delta},\tilde{\Sigma})),
  \end{displaymath}
  where $\beta_{\alpha,k}(\Delta^2)$ is the power of Hotelling's
  $T^2$-test at level $\alpha$ in dimension $k$ for a distribution
  shift $\Delta^2$, according to the nominal $F$-distribution $\ncF(\frac{n_1
  n_2}{n_1+n_2}\Delta^2; k,n_1+n_2-k-1)$.
\end{lemma}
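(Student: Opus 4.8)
The statement compares the power of Hotelling's test in dimension $k$ against dimension $k+l$, and asserts that if adding the $l$ extra Fourier coefficients contributes little to the distribution shift, the lower-dimensional test wins. The key fact is that the power $\beta_{\alpha,k}(\Delta^2)$ of Hotelling's test is the upper-tail probability of a noncentral $F$-distribution $\ncF\big(\tfrac{n_1n_2}{n_1+n_2}\Delta^2; k, n_1+n_2-k-1\big)$ beyond the level-$\alpha$ critical value of the corresponding central $F$. So I would set up the comparison entirely in terms of this noncentral-$F$ tail function, viewed jointly as a function of the dimension $k$ and the noncentrality $\lambda = \tfrac{n_1n_2}{n_1+n_2}\Delta^2$.

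First I would fix $\alpha$, $l$, and the sample sizes, and write $g(k,\lambda)$ for the power as a function of dimension and noncentrality. The plan is to establish two monotonicity facts: (i) for fixed noncentrality $\lambda$, the power is \emph{strictly decreasing} in the dimension $k$ — this is the classical ``curse of dimensionality'' for Hotelling's test and is exactly the phenomenon cited from \cite{Bai1996Effect}; and (ii) for fixed dimension, the power is \emph{strictly increasing} and continuous in $\lambda$ (noncentral $F$ is stochastically increasing in its noncentrality parameter). Given (i), at the \emph{same} noncentrality we have $g(k, \lambda_k) > g(k+l, \lambda_k)$, where $\lambda_k = \tfrac{n_1n_2}{n_1+n_2}\Delta_k^2(\tilde\delta,\tilde\Sigma)$; write $\eta(\alpha,k,l) := g(k,\lambda_k) - g(k+l,\lambda_k) > 0$ for the resulting strictly positive gap. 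Now the true power in dimension $k+l$ uses the larger noncentrality $\lambda_{k+l} = \tfrac{n_1n_2}{n_1+n_2}\Delta_{k+l}^2(\tilde\delta,\tilde\Sigma) \ge \lambda_k$, since $\Delta_{k+l}^2 \ge \Delta_k^2$ (adding coordinates to a Mahalanobis-type quadratic form never decreases it — a standard Schur-complement / block-matrix inequality). By continuity and monotonicity (ii) in $\lambda$, there exists $d(\alpha,k,l)>0$ small enough that $\lambda_{k+l} - \lambda_k < \tfrac{n_1n_2}{n_1+n_2}\,d(\alpha,k,l)$ forces $g(k+l,\lambda_{k+l}) - g(k+l,\lambda_k) < \eta(\alpha,k,l)$, hence $g(k+l,\lambda_{k+l}) < g(k,\lambda_k)$, which is the claimed conclusion; absorbing the constant $\tfrac{n_1n_2}{n_1+n_2}$ into $d$ gives the stated form in terms of $\Delta_{k+l}^2 - \Delta_k^2$.

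The two routine ingredients are: the inequality $\Delta_{k+l}^2(\tilde\delta,\tilde\Sigma) \ge \Delta_k^2(\tilde\delta,\tilde\Sigma)$, which follows from positive-definiteness of $\tilde\Sigma$ via the Schur complement (the difference equals a nonnegative quadratic form in the Schur complement of the leading $k\times k$ block); and the continuity/strict monotonicity of the noncentral-$F$ tail in $\lambda$, which is immediate from its representation as a Poisson mixture of central $F$'s (or from the fact that increasing $\lambda$ increases the noncentral $F$ in the usual stochastic order, strictly on the support).

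The main obstacle is fact (i): one must genuinely justify that, at a \emph{fixed} value of the noncentrality parameter, the level-$\alpha$ power of Hotelling's test is strictly decreasing in the dimension. Intuitively the critical value of the central $F(k, n_1+n_2-k-1)$ grows with $k$ while the ``signal budget'' $\lambda$ is held fixed, so the tail mass beyond the critical value shrinks; making this rigorous requires comparing the noncentral $F$ tails across dimensions, which one can do either by a direct analysis of the Poisson-mixture representation (each mixed central-$F$ tail decreases in $k$ because both the numerator degrees of freedom increase and the critical value increases) or by appealing to the monotonicity results already available in the literature on Hotelling's test in increasing dimension. Everything else — choosing $d(\alpha,k,l)$ by continuity, and chaining the inequalities — is bookkeeping once (i) is in hand.
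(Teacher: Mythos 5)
Your proposal is correct and follows essentially the same route as the paper: the paper's proof simply invokes Corollary 2.1 of Das Gupta and Perlman (1974), whose content is exactly your facts (i) (power of the noncentral $F$-test strictly decreasing in dimension at fixed noncentrality) and (ii) (continuity and strict monotonicity in the noncentrality parameter), combined via the same continuity/chaining step you describe. The ingredient you flag as the main obstacle --- strict monotonicity in dimension --- is precisely what the paper outsources to that reference, so your deferral to the literature matches the paper's own level of detail (note only that, as in the paper's statement, the threshold $d$ obtained this way implicitly depends on $\Delta_k^2$ and $n_1,n_2$, not solely on $(\alpha,k,l)$).
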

\begin{proof}
  This lemma is a direct application of Corollary $2.1$
  in~\cite{Das1974Power} to Hotelling's $T^2$-test in the
  graph-Fourier space. The bottom line of the proof of~\cite{Das1974Power}'s result is that $\beta_{\alpha,k}$ can
  be shown to be a continuous and strictly decreasing function of $k$,
  so that a strictly positive increase in the non-centrality parameter
  $\Delta^2$ of the $F$-distribution is necessary to maintain 
  power when increasing dimension.
\end{proof}

In particular, a direct application of Lemma~\ref{lem:das} yields the
following Corollary:
\begin{cor}
  \label{cor:smooth}
  If $\forall \ 1<l\leq p-k,\;\Delta_k^2(\tilde{\delta},\tilde{\Sigma}) =
  \Delta_{k+l}^2(\tilde{\delta},\tilde{\Sigma})$, then
  \begin{displaymath}
    \beta_{\alpha,k}(\Delta_k^2(\tilde{\delta},\tilde{\Sigma})) > \beta_{\alpha,k+l}(\Delta_{k+l}^2(\tilde{\delta},\tilde{\Sigma})).
  \end{displaymath}
\end{cor}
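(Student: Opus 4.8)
The plan is to obtain Corollary~\ref{cor:smooth} as the degenerate case of Lemma~\ref{lem:das} in which the restricted distribution shift does not grow at all when the dimension is increased from $k$ to $k+l$. Concretely, I would fix the level $\alpha$ and an arbitrary $1 < l \leq p-k$, and invoke Lemma~\ref{lem:das} to produce the associated threshold $d(\alpha,k,l) > 0$. The hypothesis of the corollary states that $\Delta_{k+l}^2(\tilde{\delta},\tilde{\Sigma}) = \Delta_k^2(\tilde{\delta},\tilde{\Sigma})$, so the increment $\Delta_{k+l}^2(\tilde{\delta},\tilde{\Sigma}) - \Delta_k^2(\tilde{\delta},\tilde{\Sigma})$ equals $0$; since $d(\alpha,k,l)$ is strictly positive, the inequality $\Delta_{k+l}^2(\tilde{\delta},\tilde{\Sigma}) - \Delta_k^2(\tilde{\delta},\tilde{\Sigma}) < d(\alpha,k,l)$ holds trivially. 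The premise of the implication in Lemma~\ref{lem:das} is therefore met, and its conclusion delivers exactly the claimed inequality $\beta_{\alpha,k}(\Delta_k^2(\tilde{\delta},\tilde{\Sigma})) > \beta_{\alpha,k+l}(\Delta_{k+l}^2(\tilde{\delta},\tilde{\Sigma}))$. Since $l$ was arbitrary, this holds for all $1 < l \leq p-k$.

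The only point that needs (minimal) care is that the argument relies on the \emph{strict} positivity of $d(\alpha,k,l)$ asserted in Lemma~\ref{lem:das}: if the threshold were merely nonnegative, the case of an exactly vanishing increment would fall outside its scope. But strict positivity is guaranteed by the underlying result of~\cite{Das1974Power}, because $\beta_{\alpha,k}$ is continuous and strictly decreasing in $k$ at fixed noncentrality, so the power gap between dimensions $k$ and $k+l$ is strictly positive and any sufficiently small $d$ fails to close it. Hence there is no real obstacle. Interpretively, the corollary isolates the idealized ``perfectly smooth'' shift, where all of the Mahalanobis signal already lies in the first $k$ graph-Fourier coordinates, and asserts that in that regime filtering to those $k$ components strictly dominates keeping $k+l$ of them --- the cleanest instance of the power gain quantified by Lemma~\ref{lem:das}.
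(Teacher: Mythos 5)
Your argument is exactly the paper's intended one: the corollary is obtained as a direct application of Lemma~\ref{lem:das}, noting that a zero increment in the distribution shift is trivially below the strictly positive threshold $d(\alpha,k,l)$, so the power inequality follows immediately. Your remark on why strict positivity of $d(\alpha,k,l)$ matters is consistent with the paper's appeal to \cite{Das1974Power}, and no gap remains.
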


\OMIT{
  In particular, it is easy to check that if $\forall \ 1<l\leq
  p-k,\;\Delta_k^2(\tilde{\delta},\tilde{\Sigma}) =
  \Delta_{k+l}^2(\tilde{\delta},\tilde{\Sigma})$, then
  $\beta_{\alpha,k}(\Delta_k^2(\tilde{\delta},\tilde{\Sigma})) >
  \beta_{\alpha,k+l}(\Delta_{k+l}^2(\tilde{\delta},\tilde{\Sigma}))$.
  Therefore, }

According to Corollary~\ref{cor:smooth}, if the distribution shift
lies in the first $k$ Fourier coefficients, then testing in this
subspace yields strictly more power than using additional
coefficients. In particular, if there exists $k<p$ such that
$\tilde{\delta}_j = 0 \;\forall\ j>k$ (\emph{i.e.}, the mean shift is
smooth) and $\tilde{\Sigma}$ is block-diagonal such that
$\tilde{\sigma}_{ij} = 0\;\forall\; i<k, j>k$, then gains in power are
obtained by testing in the first $k$ Fourier components. Although
non-necessary, this condition is plausible when the mean shift lies at
the beginning of the spectrum, as the coefficients which do not
contain the shift are not expected to be correlated with the ones that
do contain it.

Note that the result in Lemma~\ref{lem:das} is even more general, as
testing in the first $k$ Fourier components can increase power even
when the distribution shift partially lies in the remaining
components, as long as the latter portion is below a certain
threshold. Figure~\ref{fig:shiftInc} illustrates, under different
settings, the increase in distribution shift necessary to maintain a
given power level against the number of added coefficients.

\todo{LJ: Ajouter lemma.}  

If for some reason one expects that the
mean shift $\delta$ is smooth (rather than the distribution shift
$\Delta$), \emph{i.e.}, $\tilde{\delta}$ lies at the beginning of the
spectrum, and that the covariance between coefficients that contain
the shift and those that do not is non-zero, then one should use test
statistics based on estimators of the unstandardized \emph{Euclidean
  norm} $\|\delta\|$ of this shift, \emph{e.g.},
$Z$~\citep{Bai1996Effect}[Equation $(4.5)$] or
$T_n$~\citep{Chen2010A}. Results similar to Lemma~\ref{lem:das} can be
derived for these statistics. Namely, the corresponding tests gain
asymptotic power when applied at the beginning of the spectrum,
provided the Euclidean norm of $\delta$ only increases moderately as
coefficients for higher frequencies are added. The results follow
from~\cite{Bai1996Effect}[Theorem~$4.1$] and
\cite{Chen2010A}[Equations $(3.11)$--$(3.12)$], using the fact that,
by Cauchy's interlacing theorem, the trace of the square of any
positive semi-definite matrix is larger than the trace of the square
of any principal submatrix.


\begin{figure}[h]
  \begin{center}
      \includegraphics[width=0.45\linewidth]{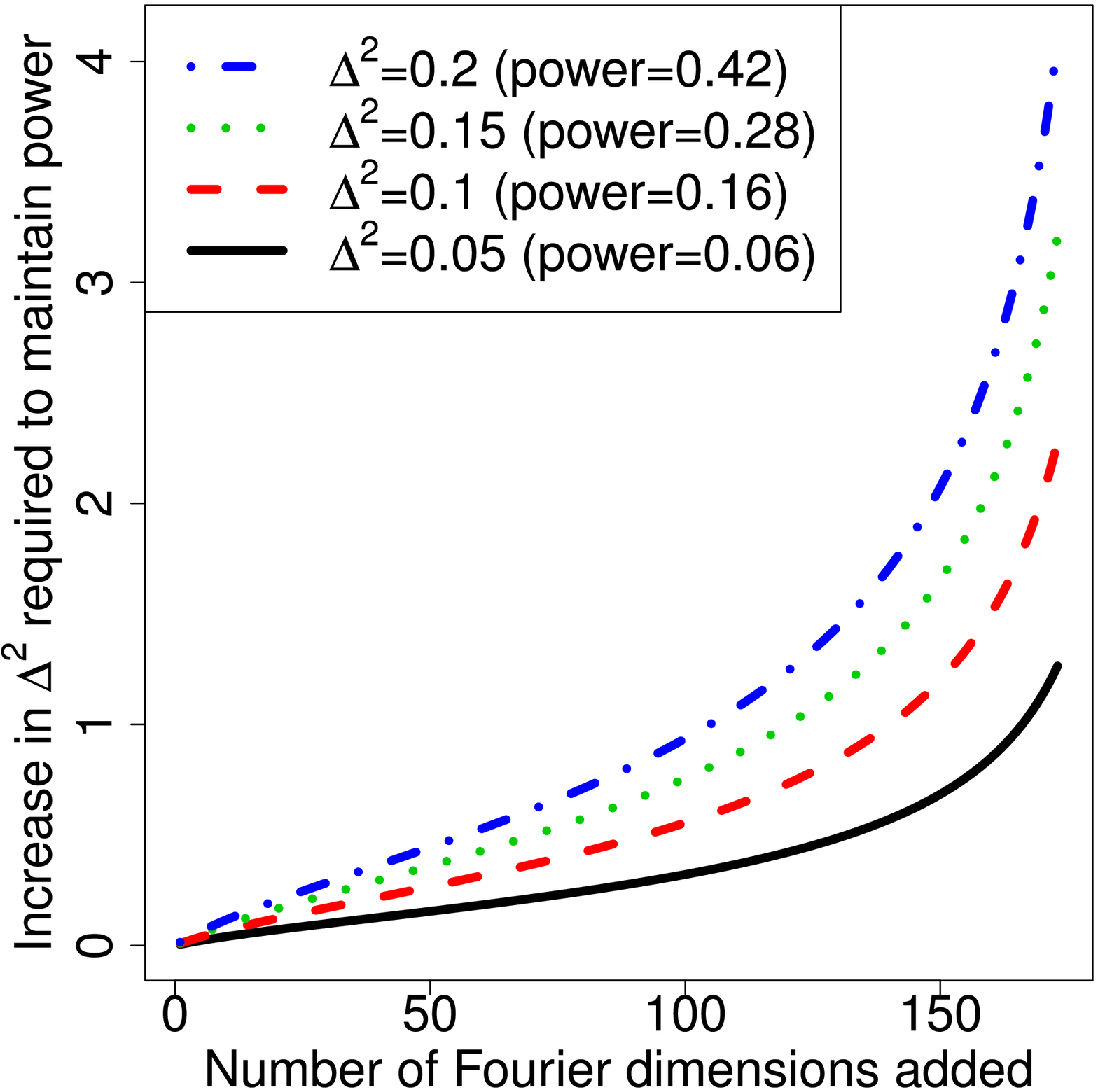}
      \hfill
      \includegraphics[width=0.45\linewidth]{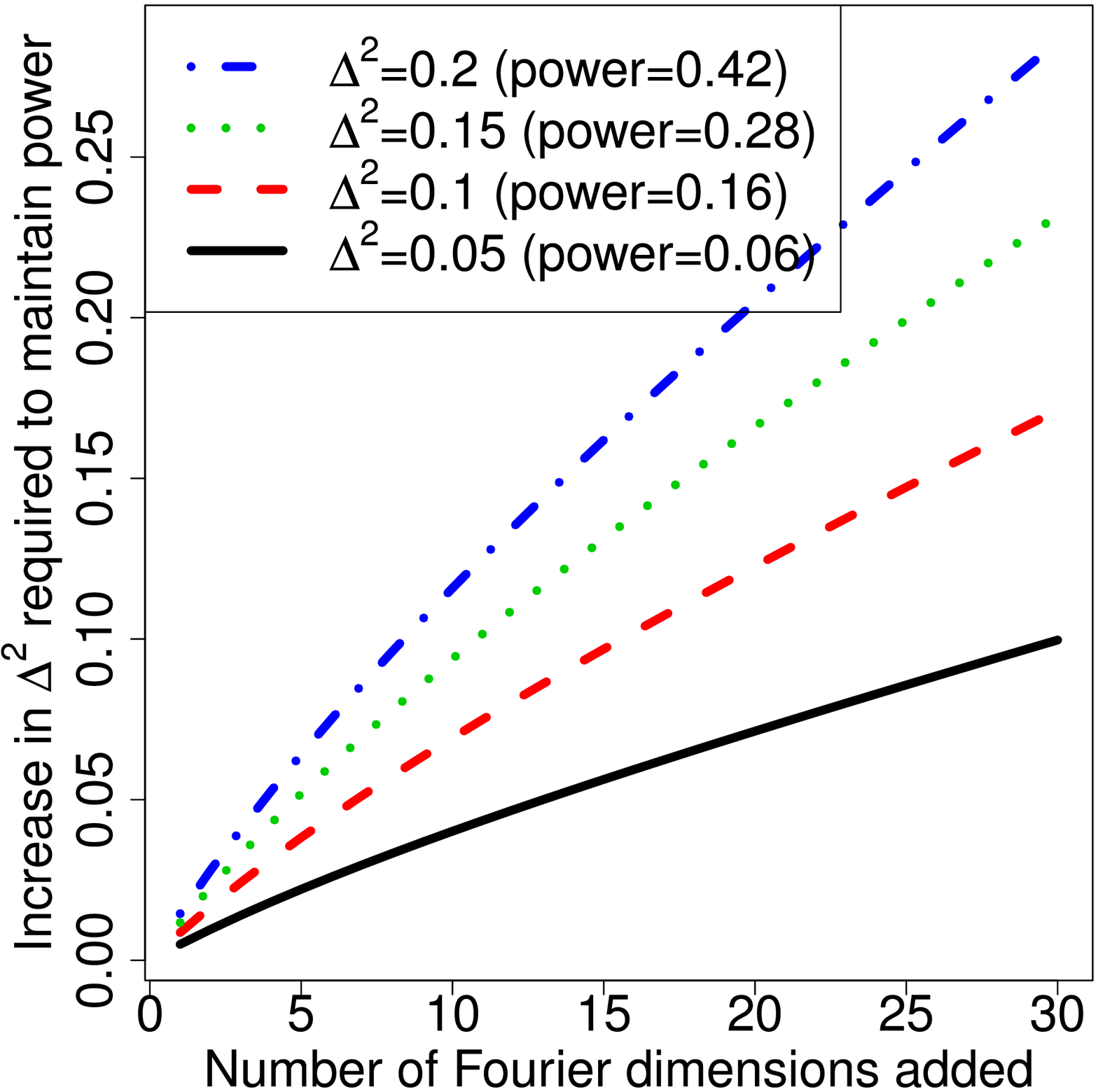}
  \end{center}
  \caption{Left: Increase in distribution shift required for
    Hotelling's $T^2$-test to maintain a given power when increasing
    the number of tested Fourier coefficients: $\Delta_{k+l}^2 -
    \Delta_{k}^2$ vs. $l$ such that
    $\beta_{\alpha,k+l}(\Delta_{k+l}^2)=\beta_{\alpha,k}(\Delta_k^2)$.
    Power $\beta_{\alpha,k+l}(\Delta_{k+l}^2)$ computed under the
    non-central $F$-distribution $\ncF(\frac{n_1
      n_2}{n_1+n_2}\Delta_{k+l}^2; k+l,n_1+n_2-(k+l)-1)$, for
    $n_1= n_2= 100$ observations, $k=5$, and $\alpha=10^{-2}$. Each line
    corresponds to the fixed shift $\Delta_{k}^2$ and power
    $\beta_{\alpha,k}(\Delta_{k}^2)$ pair indicated in the
    legend. Right: Zoom on the first $30$
    dimensions. 
    }
  \label{fig:shiftInc}
\end{figure}

\section{Non-homogeneous subgraph discovery}
\label{sec:discovery}

A systematic approach for discovering non-homogeneous subgraphs,
\emph{i.e.}, subgraphs of a large graph that exhibit a significant
shift in means, is to test all of them. In practice, however, this can
represent an intractable number of tests, so it is important to be
able to rapidly identify sets of subgraphs that all satisfy the null
hypothesis of equal means. To this end, we devise a pruning approach
based on an upper bound on the value of the test statistic for any
subgraph containing a given set of nodes.

\todo{Elaborate: combinatorial problem, exhaustive search has exponential complexity...}
\todo{Mention that this can technically be done with or without filtering.}


\todo{Simple statement of the algorithm, e.g., pseudo-code.}

\subsection{Exact algorithm}
\label{sec:exact}


Given a large graph $\G$ with $p$ nodes, we adopt the following
classical branch-and-bound-like approach to test subgraphs of size $q
\leq p$ at level $\alpha$.  We start by checking, for each node in
$\G$, whether the Hotelling $T^2$-statistic in the first $k$
graph-Fourier components of any subgraph of size $q$ containing this
node can be guaranteed to be below the level-$\alpha$ critical value
$T^2_{\alpha,k}$ (\emph{e.g.}, $(1-\alpha)$-quantile of
$\cF(k,n_1+n_2-k-1)$ distribution). If this is the case, the node is
removed from the graph. We then repeat the procedure on the edges of
the remaining graph and, iteratively, on the subgraphs up to size
$q-1$, at which point we test all the remaining subgraphs of size $q$.

Specifically, for a subgraph $g$ of $\G$ of size $q \leq p$,
Hotelling's $T^2$-statistic in the first $k \leq q$ graph-Fourier
components of $g$ is defined as
\[
\tilde{T}_k^2(g)=\frac{n_1 n_2}{n_1+n_2}(\bar{x}_1(g) -
\bar{x}_2(g))^\top U_{[k]} \left( U_{[k]}^\top \hat{\Sigma}(g) U_{[k]}
\right)^{-1}U_{[k]}^\top (\bar{x}_1(g) - \bar{x}_2(g)),
\]
where $U_{[k]}$ is the $q \times k$ restriction of the matrix of $q$ 
eigenvectors of the Laplacian of $g$ to its first $k$ columns
(\emph{i.e.}, $U_{[k]}(g)$, where we omit $g$ to ease notation) and $\bar{x}_i(g),\; i=1,2$,
and $\hat{\Sigma}(g)$ are, respectively, the empirical means and
pooled covariance matrix restricted to the nodes in $g$.  We make use
of the following upper bound on $\tilde{T}_k^2(g)$.

\begin{lemma}[Upper bound on $\tilde{T}_k^2$]
  \label{lem:neighBound}
  For any subgraph $g$ of $\G$ of size $q \leq p$, 
any subgraph $g'$ of $g$ of size $q'\leq q$, and any $k \leq q$, then
  \begin{displaymath}
    \tilde{T}_k^2(g) \leq T^2(\nu(g',q-q')) \,,
  \end{displaymath}
  where $\nu(g',r)$ is the $r$-neighborhood of $g'$, that is, the union of the nodes of $g'$ and the nodes
  whose shortest path to a node of $g'$ is less than or equal to
  $r$.
\end{lemma}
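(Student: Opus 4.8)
The plan is to bound $\tilde T_k^2(g)$ by two successive monotonicity steps and then close the argument with a purely combinatorial fact about neighborhoods in connected subgraphs. The tool driving both monotonicity steps is the variational identity $v^\top S^{-1} v = \max_{u \neq 0}(u^\top v)^2/(u^\top S u)$, valid for any positive definite $S$ (Cauchy--Schwarz in the $S$-inner product, with equality at $u = S^{-1}v$); more generally, $\max_{u \in W \setminus\{0\}}(u^\top v)^2/(u^\top S u)$ is nondecreasing as the subspace $W$ grows, and is unchanged if one enlarges the ambient space while embedding $W$ by zero-padding. Throughout I would use that ``restricted to the nodes of $h$'' means taking principal submatrices and subvectors, so that for node sets $h \subseteq h'$, extending $u \in \RR^{|h|}$ by zeros to $\RR^{|h'|}$ leaves both $u^\top(\bar x_1 - \bar x_2)$ and $u^\top \hat\Sigma u$ unchanged.

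\emph{Step 1 (filtering reduces the statistic): $\tilde T_k^2(g) \leq T^2(g)$.} Writing $c = n_1n_2/(n_1+n_2)$, $\delta_g = \bar x_1(g) - \bar x_2(g)$, $w = U_{[k]}^\top \delta_g$, and $M = U_{[k]}^\top \hat\Sigma(g) U_{[k]}$, the identity turns $\tilde T_k^2(g)/c = w^\top M^{-1} w$ into $\max_{z \in \RR^k}(z^\top w)^2/(z^\top M z)$, and the substitution $u = U_{[k]} z$ rewrites this as the maximum of $(u^\top \delta_g)^2/(u^\top \hat\Sigma(g) u)$ over the $k$-dimensional subspace $\mathrm{span}\,U_{[k]} \subseteq \RR^q$; this is at most the maximum over all of $\RR^q$, which is $T^2(g)/c$. \emph{Step 2 (enlarging the node set increases the statistic): $T^2(h) \leq T^2(h')$ whenever $h \subseteq h'$.} Indeed $T^2(h)/c$ is the maximum of the same Rayleigh-type quotient over the zero-padded copy of $\RR^{|h|}$ sitting inside $\RR^{|h'|}$, hence no larger than $T^2(h')/c$.

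\emph{Step 3 (containment):} I would show $V(g) \subseteq \nu(g', q - q')$, which together with Steps 1 and 2 (taking $h = V(g)$, $h' = \nu(g', q-q')$) yields $\tilde T_k^2(g) \leq T^2(g) \leq T^2(\nu(g', q-q'))$. Since $g$ is connected, fix $v \in V(g)$ and pick a shortest path in $g$ from $v$ to the node set $V(g')$, say $v = v_0 - v_1 - \cdots - v_d$ with $v_d \in V(g')$; minimality forces $v_0, \dots, v_{d-1} \notin V(g')$, so these are $d$ distinct nodes of $V(g) \setminus V(g')$, giving $d \leq q - q'$. As distances in $\G$ never exceed distances within the subgraph $g$, we get $\mathrm{dist}_\G(v, V(g')) \leq d \leq q - q'$, i.e. $v \in \nu(g', q-q')$. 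I expect the only delicate points to be the bookkeeping in Step 1 (that the objects restricted to $g$ really are principal submatrices, so the subspace embeddings are exact) and the standing assumption that all tested subgraphs are connected, which Step 3 genuinely needs; a singular $\hat\Sigma$ is handled for free by reading each statistic as its constrained Rayleigh maximum in the extended reals, with the bound then trivial on the side where that maximum is $+\infty$.
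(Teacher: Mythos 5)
Your proposal is correct, and its skeleton matches the paper's: bound $\tilde T_k^2(g)$ by $T^2(g)$ (filtering can only decrease the statistic), then bound $T^2(g)$ by $T^2(\nu(g',q-q'))$ (enlarging the node set can only increase it), using $g\subset\nu(g',q-q')$. Where you differ is in how the key monotonicity is established. The paper isolates it as a ``Bessel inequality for the Mahalanobis norm'' (Lemma~\ref{lem:bessel}) and proves it by the factorization $\Sigma^{-1}-P(P^\top\Sigma P)^{-1}P^\top=\Sigma^{-1/2}\bigl(I-\Pi\bigr)\Sigma^{-1/2}$ with $\Pi$ an orthogonal projection, then applies it twice: once with $P=U_{[k]}$ and once with $P$ the coordinate-selection (compression) matrix from $\nu(g',q-q')$ down to the nodes of $g$. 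You instead use the variational identity $v^\top S^{-1}v=\max_{u\neq 0}(u^\top v)^2/(u^\top S u)$ and monotonicity of this maximum under subspace inclusion and zero-padding; this is an equivalent but more elementary, projection-free route, and as you note it extends painlessly to singular $\hat\Sigma$ by reading the statistics as constrained Rayleigh maxima. Two further points in your favor: your Step 2 silently requires (and you verify) that restriction to a node subset is exactly a principal subvector/submatrix, which is what makes the paper's second application of Lemma~\ref{lem:bessel} legitimate; and you actually prove the containment $V(g)\subseteq\nu(g',q-q')$, which the paper asserts in a single clause, correctly observing that it needs $g$ to be connected (implicit in the subgraph-discovery setting, since only connected subgraphs are enumerated; for a disconnected $g$ the stated bound can fail). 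What the paper's formulation buys in exchange is a reusable standalone inequality for arbitrary orthonormal compressions, stated once and invoked twice.
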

The proof involves the following result:
\begin{lemma}[Bessel inequality for Mahalanobis norm]
  \label{lem:bessel}
  Let $\Sigma \in \RR^{p,p}$ be an invertible matrix and
  $P\in\RR^{p,k}$, $k \leq p$, be a matrix with orthonormal columns. For any $x \in\RR^p$,
\[
x^\top\Sigma^{-1}x \geq
x^{\top}P\left(P^\top\Sigma P\right)^{-1}P^\top x.
\]
\end{lemma}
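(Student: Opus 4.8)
The plan is to reduce this to the ordinary Bessel inequality by a change of coordinates that turns the Mahalanobis geometry into the Euclidean one. In the intended application $\Sigma$ is a covariance matrix, hence symmetric positive definite, and I will use positive-definiteness rather than mere invertibility (the inequality is really only meaningful in that case), so that $\Sigma$ admits an invertible symmetric square root $\Sigma^{1/2}$. Set $z = \Sigma^{-1/2}x$ and $M = \Sigma^{1/2}P \in \RR^{p,k}$. Then the left-hand side is $x^\top\Sigma^{-1}x = z^\top z = \|z\|^2$, while, using $P^\top x = M^\top z$ and $P^\top\Sigma P = M^\top M$, the right-hand side becomes $z^\top M(M^\top M)^{-1}M^\top z$.

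First I would check that $M^\top M = P^\top\Sigma P$ is invertible: $P$ has rank $k$ since its columns are orthonormal, and $\Sigma^{1/2}$ is invertible, so $M$ has full column rank $k$. Consequently $\Pi := M(M^\top M)^{-1}M^\top$ is the orthogonal projector onto the column space of $M$ (it is symmetric, idempotent, and acts as the identity on $\mathrm{col}(M)$), so the right-hand side equals $z^\top\Pi z = \|\Pi z\|^2 \le \|z\|^2$, which is exactly the claim; equality holds iff $\Sigma^{-1/2}x \in \mathrm{col}(\Sigma^{1/2}P)$.

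I do not expect a genuine obstacle here; the only points requiring care are (i) invoking positive-definiteness of $\Sigma$ so that $\Sigma^{1/2}$ exists, and (ii) noting that $P^\top\Sigma P$ is invertible, which is also what makes the generalized inverse appearing in $\tilde{T}_k^2$ an honest inverse. An alternative argument that avoids square roots uses the variational identities $x^\top\Sigma^{-1}x = \max_{v\in\RR^p}\,(2v^\top x - v^\top\Sigma v)$ and $x^\top P(P^\top\Sigma P)^{-1}P^\top x = \max_{w\in\RR^k}\,(2w^\top P^\top x - w^\top P^\top\Sigma P\,w) = \max_{v\in\mathrm{range}(P)}\,(2v^\top x - v^\top\Sigma v)$; since the second maximum is over a subspace of $\RR^p$, it is no larger than the first. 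When writing up I would use the square-root/projection version as the main proof, since Lemma~\ref{lem:neighBound} needs only the inequality — first with $P = U_{[k]}$, $x \propto \bar{x}_1(g) - \bar{x}_2(g)$ and $\Sigma = \hat{\Sigma}(g)$ to get $\tilde{T}_k^2(g)\le T^2(g)$, and then applied a second time with a coordinate-selection $P$ to pass from $g$ to the larger neighborhood graph, using $g \subseteq \nu(g',q-q')$.
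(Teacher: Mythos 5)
Your proof is correct and takes essentially the same route as the paper: both conjugate by $\Sigma^{1/2}$ and recognize $\Sigma^{1/2}P\left(P^\top\Sigma P\right)^{-1}P^\top\Sigma^{1/2}$ as an orthogonal projection, so the bound is Bessel's inequality $\|\Pi z\|^2\le\|z\|^2$ for $z=\Sigma^{-1/2}x$ (the paper merely phrases this as positive semi-definiteness of $\Sigma^{-1}-P\left(P^\top\Sigma P\right)^{-1}P^\top$ rather than as a change of variables). Your observation that positive definiteness, not mere invertibility, is what the argument actually uses is also apt, since the paper's proof likewise relies on $\Sigma^{1/2}$.
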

\begin{proof}
  First note that, by orthonormality of the columns of $P$,
  $P^\top\Sigma P$ is indeed invertible, and that
  \[
  \Sigma^{-1} - P\left(P^\top\Sigma P\right)^{-1}P^\top =
  \Sigma^{-\frac{1}{2}}\left( I - \Sigma^{\frac{1}{2}}
    P\left(P^\top\Sigma^{\frac{1}{2}}\Sigma^{\frac{1}{2}}
      P\right)^{-1}P^\top\Sigma^{\frac{1}{2}}\right)\Sigma^{-\frac{1}{2}},
  \]
  where $\Sigma^{\frac{1}{2}}
  P\left(P^\top\Sigma^{\frac{1}{2}}\Sigma^{\frac{1}{2}}
    P\right)^{-1}P^\top\Sigma^{\frac{1}{2}}$ is an orthogonal
  projection, with eigenvalues either $0$ or $1$.  Thus,
  $I-\Sigma^{\frac{1}{2}}
  P\left(P^\top\Sigma^{\frac{1}{2}}\Sigma^{\frac{1}{2}}
    P\right)^{-1}P^\top\Sigma^{\frac{1}{2}}$ is
  positive-semi-definite, as its eigenvalues are also either $0$ or
  $1$. The result follows from properties of products of
  positive-semi-definite matrices.
\end{proof}
We can now prove Lemma~\ref{lem:neighBound}.
\begin{proof}
  By Lemma~\ref{lem:bessel}, 
  \begin{align*}
  \tilde{T}_k^2(g) &\leq
  \frac{n_1n_2}{n_1+n_2}(\bar{x}_1(g) -  \bar{x}_2(g))^\top U \bigl( U^\top \hat{\Sigma}(g) U
  \bigr)^{-1}U^\top (\bar{x}_1(g) - \bar{x}_2(g)) \\
  &= \frac{n_1n_2}{n_1+n_2}(\bar{x}_1(g) - \bar{x}_2(g))^\top
  \hat{\Sigma}(g)^{-1}(\bar{x}_1(g) - \bar{x}_2(g)) = T^2(g).
\end{align*}
As $g \subset \nu(g',q-q')$, applying Lemma~\ref{lem:bessel} a second
time with the compression from $\nu(g',q-q')$ to the nodes of $g$
yields the result.
\end{proof}
Note that the bound takes into account the fact that the
$T^2$-statistic is eventually computed in the first few components of
a basis which is not known beforehand~: at each step, for each
potential subgraph $g'$ which would include the subgraph $g$ which we
consider for pruning, the $\tilde{T}_k^2(g')$ that we need to upper
bound depends on the graph Laplacian of $g'$.

\subsection{Mean-shift approximation}
\label{sec:euclidean}

\todo{*** CHECK: Add details on algo and how bound is used, relate to
  Lemma 2 bound. Strict vs. non-strict ineq for rejection, cf. Lemma
  4.}

For ``small-world'' graphs above a certain level of connectivity and
$q$ large enough, the $(q-s)$-neighborhood of $g'$, $\nu(g',q-s)$,
tends to be large, at least at the beginning of the above exact
algorithm, and the number of tests actually performed won't decrease
much compared to the total number of possible tests. One can, however,
identify much more efficiently the subgraphs whose sample mean shift
in the first $k$ components of the graph-Fourier space has Euclidean
norm $\|\hat{\tilde{\delta}}_{[k]}(g)\| = \|U_{[k]}^\top(\bar{x}_1(g)
- \bar{x}_2(g))\|$ above a certain threshold. Indeed, it is
straightforward to see that
\begin{align*}
\|U_{[k]}^\top(\bar{x}_1(g) - \bar{x}_2(g))\|^2 &\leq
\|U^\top(\bar{x}_1(g) - \bar{x}_2(g))\|^2\\ 
&= \|\bar{x}_1(g) - \bar{x}_2(g)\|^2 \\
&\leq \|\bar{x}_1(g')-\bar{x}_2(g')\|^2 \\
+ & \max_{v_1,\ldots,v_{q-s}\in\nu(g',q-s)}\|\bar{x}_1(v_1,\ldots,v_{q-s})-\bar{x}_2(v_1,\ldots,v_{q-s})\|^2.
\end{align*}
This inequality can then be used in the procedure of
Section~\ref{sec:exact}, to identify all subgraphs for which the
Euclidean norm of the sample mean shift exceeds a given threshold:
$\|\hat{\tilde{\delta}}_{[k]}(g)\|^2 > \theta$. For any $\alpha$, if
this threshold $\theta$ is low enough, all the subgraphs with
$\tilde{T}_k^2(g)>T^2_{\alpha,k}$ are included in this set. Performing
the actual $T^2$-test on these pre-selected subgraphs yields exactly
the set of subgraphs that would have been identified using the exact
procedure of Section~\ref{sec:exact}. More precisely, we have the
following result:
\begin{lemma}
\label{lem:nc}
For any threshold $\theta > 0$, $k \leq q \leq p$, and any subgraph
$g$ of size $q$ such that $\|\hat{\tilde{\delta}}_{[k]}(g)\|^2 <
\theta$,
  \begin{displaymath}  
    N\tilde{T}_k^2(g) > T^2_{\alpha,k} \Rightarrow
    \lambda_{min}(\hat{\tilde{\Sigma}}_{[k]}(g)) < \frac{Nn_1n_2\theta}{(n_1+n_2) T^2_{\alpha,k}},
  \end{displaymath}
  where $T^2_{\alpha,k}$ is the level-$\alpha$ critical value for
  $\tilde{T}_k^2$ (\emph{e.g.}, $(1-\alpha)$-quantile of
  $\cF(k,n_1+n_2-k-1)$), 
  $N = \frac{n_1+n_2-k-1}{(n_1+n_2-2)k}$ 
  and
  $\lambda_{min}(\hat{\tilde{\Sigma}}_{[k]}(g))$ denotes the smallest
  eigenvalue of $\hat{\tilde{\Sigma}}_{[k]}(g) =
  U_{[k]}\hat{\Sigma}(g)U_{[k]}^\top$.
\end{lemma}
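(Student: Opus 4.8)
The plan is to unwind the definition of $\tilde{T}_k^2(g)$, bound the Mahalanobis-type quadratic form it contains by a multiple of $\|\hat{\tilde{\delta}}_{[k]}(g)\|^2$ governed by the smallest eigenvalue of the Fourier-space covariance, and then feed in the two hypotheses to isolate $\lambda_{min}(\hat{\tilde{\Sigma}}_{[k]}(g))$. Write $\hat{\tilde{\delta}}_{[k]}(g) = U_{[k]}^\top(\bar{x}_1(g) - \bar{x}_2(g))$ and take $\hat{\tilde{\Sigma}}_{[k]}(g) = U_{[k]}^\top \hat{\Sigma}(g) U_{[k]}$, the $k \times k$ pooled covariance in the first $k$ graph-Fourier coordinates (whose eigenvalues are the nonzero eigenvalues of $U_{[k]}\hat{\Sigma}(g)U_{[k]}^\top$), so that by definition
\[
\tilde{T}_k^2(g) = \frac{n_1 n_2}{n_1+n_2}\, \hat{\tilde{\delta}}_{[k]}(g)^\top\, \hat{\tilde{\Sigma}}_{[k]}(g)^{-1}\, \hat{\tilde{\delta}}_{[k]}(g).
\]
Because $U_{[k]}$ has orthonormal columns (eigenvectors of the symmetric Laplacian of $g$) and $\hat{\Sigma}(g)$ is positive definite, $\hat{\tilde{\Sigma}}_{[k]}(g)$ is positive definite as well --- the same observation made at the start of the proof of Lemma~\ref{lem:bessel} --- so it is invertible with $\lambda_{min}(\hat{\tilde{\Sigma}}_{[k]}(g)) > 0$.

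First I would note that the largest eigenvalue of $\hat{\tilde{\Sigma}}_{[k]}(g)^{-1}$ equals $1/\lambda_{min}(\hat{\tilde{\Sigma}}_{[k]}(g))$, so for every $v \in \RR^k$ one has $v^\top \hat{\tilde{\Sigma}}_{[k]}(g)^{-1} v \le \|v\|^2/\lambda_{min}(\hat{\tilde{\Sigma}}_{[k]}(g))$. Taking $v = \hat{\tilde{\delta}}_{[k]}(g)$, multiplying by $N>0$, and then invoking the standing hypothesis $\|\hat{\tilde{\delta}}_{[k]}(g)\|^2 < \theta$ (a strict inequality, which propagates since the remaining factors are positive) gives
\[
N\tilde{T}_k^2(g) \;\le\; \frac{N n_1 n_2}{n_1+n_2}\cdot\frac{\|\hat{\tilde{\delta}}_{[k]}(g)\|^2}{\lambda_{min}(\hat{\tilde{\Sigma}}_{[k]}(g))} \;<\; \frac{N n_1 n_2 \theta}{(n_1+n_2)\,\lambda_{min}(\hat{\tilde{\Sigma}}_{[k]}(g))}.
\]

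Finally I would assume the antecedent $N\tilde{T}_k^2(g) > T^2_{\alpha,k}$ and chain it with the display above to obtain $T^2_{\alpha,k} < N n_1 n_2 \theta / \bigl((n_1+n_2)\,\lambda_{min}(\hat{\tilde{\Sigma}}_{[k]}(g))\bigr)$; since $T^2_{\alpha,k} > 0$, cross-multiplying and dividing yields the asserted bound
\[
\lambda_{min}(\hat{\tilde{\Sigma}}_{[k]}(g)) \;<\; \frac{N n_1 n_2 \theta}{(n_1+n_2)\,T^2_{\alpha,k}}.
\]
I do not anticipate a real obstacle: the argument is a one-line eigenvalue inequality followed by rearrangement, and the only point deserving care is the strict positive-definiteness of $\hat{\tilde{\Sigma}}_{[k]}(g)$, needed both to invert it and to divide by $\lambda_{min}$. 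It is worth recording how the contrapositive feeds the procedure of Section~\ref{sec:euclidean}: once $\theta$ is chosen small enough that the right-hand side above falls below $\lambda_{min}(\hat{\tilde{\Sigma}}_{[k]}(g))$ for every relevant subgraph, no subgraph with $\|\hat{\tilde{\delta}}_{[k]}(g)\|^2 < \theta$ can be rejected, so pre-selecting only the subgraphs whose sample Fourier-space shift has Euclidean norm at least $\sqrt{\theta}$ --- which is done cheaply via the bound derived there --- misses no rejection.
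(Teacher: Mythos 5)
Your proof is correct and follows the same route as the paper: the paper's entire proof is the eigenvalue inequality \(x^\top(\hat{\tilde{\Sigma}}_{[k]}(g))^{-1}x \leq \|x\|^2/\lambda_{min}(\hat{\tilde{\Sigma}}_{[k]}(g))\), and you simply spell out the remaining substitution of \(\|\hat{\tilde{\delta}}_{[k]}(g)\|^2 < \theta\) and the rearrangement against \(N\tilde{T}_k^2(g) > T^2_{\alpha,k}\) that the paper leaves implicit. You also correctly read \(\hat{\tilde{\Sigma}}_{[k]}(g)\) as the \(k\times k\) compression \(U_{[k]}^\top\hat{\Sigma}(g)U_{[k]}\), which is the interpretation consistent with the definition of \(\tilde{T}_k^2(g)\) (the formula in the lemma statement is a typo).
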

\begin{proof}
  As $I -
  (\hat{\tilde{\Sigma}}_{[k]}(g))^{-1}\lambda_{min}(\hat{\tilde{\Sigma}}_{[k]}(g))
  \succeq 0$, it follows that, for any $x$, 
  \[
  x^\top
  (\hat{\tilde{\Sigma}}_{[k]}(g))^{-1}x \leq
  \frac{\|x\|^2}{\lambda_{min}(\hat{\tilde{\Sigma}}_{[k]}(g))}.
  \]
\end{proof}

Lemma~\ref{lem:nc} states that for any subgraph which would be detected by Hotelling's
$T^2$-statistic $\tilde{T}_k^2(g)$ but not by the Euclidean criterion
$\|\hat{\tilde{\delta}}_{[k]}(g)\|^2$, the sample covariance matrix in the
restricted graph-Fourier space has an eigenvalue below a certain
threshold. This implies that such false
negative subgraphs (from the Euclidean approximation to the exact
algorithm) always have a small mean shift in the graph-Fourier space,
but in a direction of small variance.  In context of gene expression,
this is related to the well-known issue of the detection of DE genes
by virtue of their small variances. Even though the differences in
expression appear to be highly significant for these genes, they
correspond to small effects that are not interesting from a practical
point of view (\emph{i.e.}, biologically insignificant). \todo{PN: elaborate on how such cases could be caused by artifacts in the assay or pre-processing} Methods for
addressing this problem are proposed in
~\cite{Lonnstedt2001Replicated}.  Note that
$\lambda_{min}(\hat{\Sigma}(g))) \leq
\lambda_{min}(\hat{\tilde{\Sigma}}_{[k]}(g)))$; thus, the remark on
variances holds for both the graph-Fourier and original
spaces. However, if $q$ is large, we expect
$\lambda_{min}(\hat{\Sigma}(g))$ to be very small, while filtering
somehow controls the conditioning of the covariance matrix.

\subsection{Multiple testing}

\todo{LJ: Provide examples of the number of investigated
  subgraphs and number of permutations.}

\todo{SD: Discuss choice of Type I error rate and complete null
  hypothesis. More precise notation. Deserve further study. Ongoing
  questions.}

Testing for homogeneity over the potentially large number of subgraphs
investigated as part of the above algorithms immediately raises the
issue of multiple testing. However, the present multiplicity problem
is unusual, in the sense that one does not know in advance the total
number of tests and which tests will be performed
specifically. Standard multiple testing procedures, such as those in
\cite{Dudoit2008Multiple}, are therefore not immediately applicable.

In an attempt to address the multiplicity issue, we apply a
permutation procedure to control the number of false positive
subgraphs under the complete null hypothesis of identical
distributions in the two populations.  Specifically, one permutes the
class/population labels (1 or 2) of the $n_1+n_2$ observations and
applies the non-homogeneous subgraph discovery algorithm to the
permuted data to yield a certain number of false positive
subgraphs. Repeating this procedure a sufficiently large number of
times produces an estimate of the distribution of the number of Type I
errors under the complete null hypothesis of identical distributions.

\section{Results}
\label{sec:experiments}

We evaluate the empirical behavior of the procedures proposed in
Sections~\ref{sec:test} and \ref{sec:discovery}, first on synthetic
data, then on breast cancer microarray data analyzed in context of
KEGG pathways.

\subsection{Synthetic data}

\todo{*** CHECK: Add details: Random graph, exact value of $\delta$,
  $\Sigma$, $k_0$, $k$, power, Type I error rate, etc.}

The performance of the graph-structured test is assessed in cases
where the distribution shift $\Delta^2$ satisfies the smoothness
assumptions described in Section~\ref{sec:test}.  We first generate a
connected random graph with $p=20$ nodes.  Next, we generate $10,000$
datasets, each comprising $n_1=n_2=20$ Gaussian random vectors in
$\RR^{p}$, with null mean shift $\delta$ for $5,000$ datasets and mean
shift $\delta=1$ for the remaining $5,000$. For the latter datasets, the
non-zero shift is built in the first $k_0=3$ Fourier coefficients (the
shift being zero for the remaining $p-k_0$ coefficients) and an
inverse Fourier transformation is applied to random vectors generated
in the graph-Fourier space. We consider two covariance settings: in
the first one, the covariance matrix in the graph-Fourier space is
diagonal with diagonal elements at $\frac{1}{\sqrt{p}}$. In the second
one, correlation is introduced between the shifted coefficients
only. Specifically, for $i,j\leq k_o$,
$\Sigma_{ij}=\frac{0.5}{\sqrt{p}}$ if $i\neq j$,
$\Sigma_{ii}=\frac{0.9}{\sqrt{p}}$ otherwise.

Figure~\ref{fig:rocs} displays receiver operator characteristic (ROC)
curves for mean shift detection by the standard Hotelling $T^2$-test,
$T^2$ in the first $k_0$ Fourier coefficients, $T^2$ in the first
$k_0$ principal components (PC), the adaptive Neyman test of
~\cite{Fan1998Test}, and a modified version of this test 
where the correct value of $k_0$ is specified. Note that we do not
consider sparse learning
approaches~\citep{Jacob2009Group,Jenatton2009Structured}, but it would
be straightforward to design a realistic setting where such approaches
are outperformed by testing, \emph{e.g.}, by adding correlation
between some of the functions under $\HO$.

\begin{figure}[h]
  \begin{center}
    \includegraphics[width=0.45\linewidth]{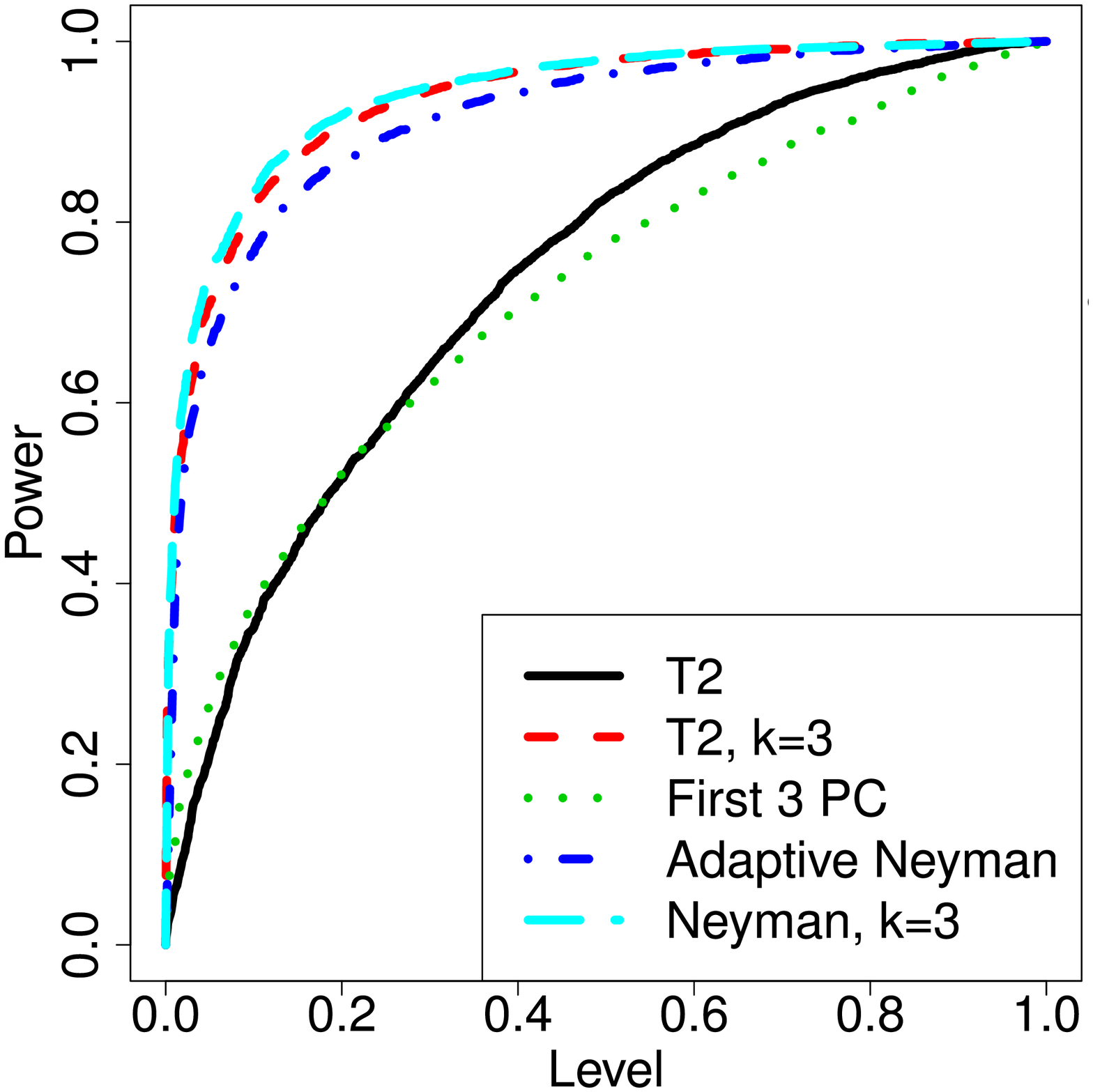}
    \includegraphics[width=0.45\linewidth]{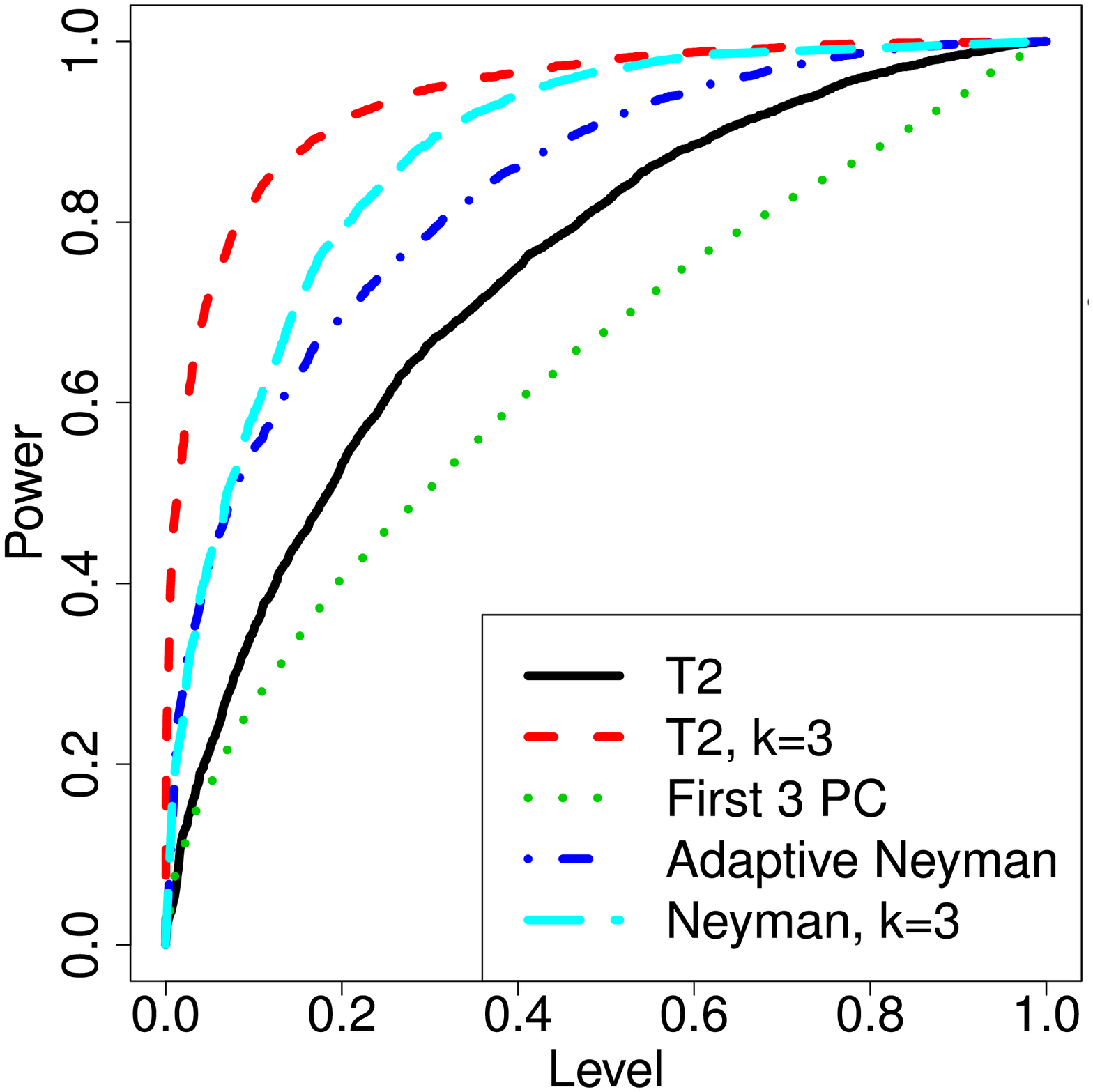}
  \end{center}
  \caption{ROC curves for the detection of a smooth shift using various
    test statistics. Left: Diagonal covariance structure. Right:
    Block-diagonal covariance structure. }
  \label{fig:rocs}
\end{figure}
\todo{*** CHECK: Figure \ref{fig:rocs}: Power vs. Type I error rate, legend, lty.}
The first important comparison is between the classical Hotelling
$T^2$-test versus the $T^2$-test in the graph-Fourier space. As
expected from Lemma~\ref{lem:das}, testing in the restricted space
where the shift lies performs much better than testing in the full
space which includes irrelevant coefficients. The difference can be
made arbitrarily large by increasing the dimension $p$ and keeping the
shift unchanged. The graph-structured test retains a large advantage
even for moderately smooth shifts, \emph{e.g.}, when $k_0=3$ and
$p=5$. Of course, this corresponds to the optimistic case where the
number of shifted coefficients $k_0$ is known. Figure~\ref{fig:miss}
shows the power of the test in the graph-Fourier space for various
choices of $k$. Even when missing some coefficients ($k<k_0$) or adding a few
non-relevant ones ($k>k_0$), the power of the graph-structured test is higher than
that of the $T^2$-test in the full space. The principal component
approach is shown because it was proposed for the application which
motivated our work~\citep{Ma2009Identification} and as it
illustrates that the performance improvement originates not only from
dimensionality reduction, but also from the fact that this reduction
is in a direction that does not decrease the shift. We emphasize that
power entirely depends on the nature of the shift and that a PC-based
test would outperform our Fourier-based test when the shift lies in
the first principal components rather than Fourier coefficients. The
statistics of~\cite{Bai1996Effect} and \cite{Chen2010A} are also
largely outperformed by our graph-structured statistic (ROC curves not
shown in Figure~\ref{fig:rocs} for the sake of readability), which
illustrates that working in the graph-Fourier space solves the problem
of high-dimensionality for which these statistics were designed. Here
again, for a non-smooth shift, the comparison would be less
favorable. Finally, we consider the adaptive Neyman
test of \cite{Fan1998Test}, which takes advantage of smoothness assumptions for
time-series. This test differs from our graph-structured test, as
Fourier coefficients for stationary time-series are known to be
asymptotically independent and Gaussian. For graphs, the asymptotics
would be in the number of nodes, which is typically small, and
necessary conditions such as stationarity are more difficult to define
and unlikely to hold for data like gene expression measurements. In the
uncorrelated setting, the modified version of the 
\cite{Fan1998Test} statistic based the true number of non-zero coefficients performs
approximately as well as the graph-structured $T^2$. However, for
correlated data, it loses power and both versions
can have arbitrarily degraded performance. This, together with the
need to use the bootstrap to calibrate this test,
illustrates that direct transposition of the \cite{Fan1998Test} test
to the graph context is not optimal.

\OMIT{%
  The problem is more complex for graphs, because the asymptotic would
  be in the number of nodes, which is much smaller than the number of
  points in a time-series, and the necessary conditions such as
  stationarity are more difficult to define and unlikely for data like
  gene expressions. In the uncorrelated setting, the Neyman statistic
  of ~\cite{Fan1998Test} using the correct number of coefficients
  performs approximately as well as the $T^2$. The adaptive version
  loses some power, and both version can have arbitrarily degraded
  performances when adding correlation. This together with the need to
  use bootstrap to calibrate the adaptive Neyman statistic illustrate
  that direct transposition of~\cite{Fan1998Test} to the context of
  graphs is not optimal.  }

\begin{figure}[h]
  \begin{center}
    \includegraphics[width=0.5\linewidth]{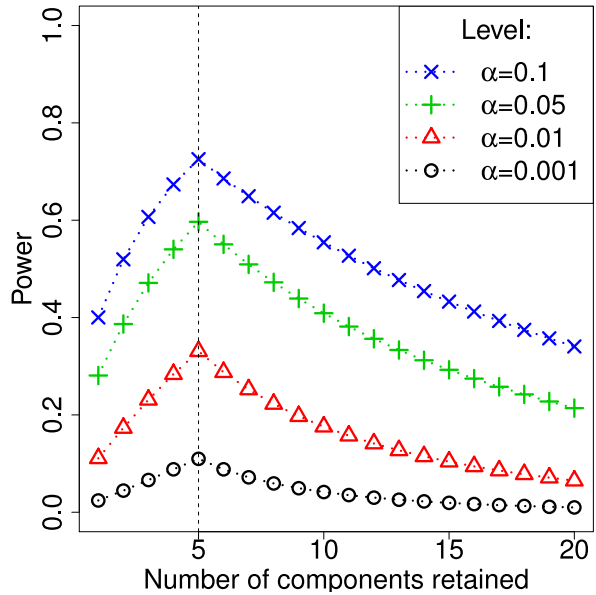}
  \end{center}
  \caption{Power of the $T^2$-test in the graph-Fourier space with an
    actual mean shift evenly distributed among the first $k_0=5$
    coefficients. }
  \label{fig:miss}
\end{figure}

\todo{*** CHECK: Add details: Random graph, exact value of $\delta$,
  $\Sigma$, $k_0$, $k$, $q$, $n_i$, $\alpha$, number of permutations.
  Check calc of $\lambda_{min}$, should report $\lambda_{min} <$
  according to Lemma 4.  Machine specs for run-time. Table for
  run-time with ratios, number of visited subgraphs?}  To evaluate the
performance of the subgraph discovery algorithms proposed in
Section~\ref{sec:discovery}, we generated a graph of $100$ nodes
formed by tightly-connected hubs of sizes sampled from a Poisson distribution with parameter 10 and only weak connections
between these hubs (Figure~\ref{fig:randg}). Such a graph structure
mimics the typical topology of gene regulation networks. We randomly
selected one subgraph of $5$ nodes to be non-homogeneous, with smooth
shift in the first $k_0=3$ Fourier coefficients. The mean shift was
set to zero on the rest of the graph. We set the norm of the mean
shift to $1$ and the covariance matrix to identity, so that detecting
the shifted subgraph is impossible by just looking at the mean shift
on the graph.

\begin{figure}[h]
  \begin{center}
      \includegraphics[width=.6\linewidth]{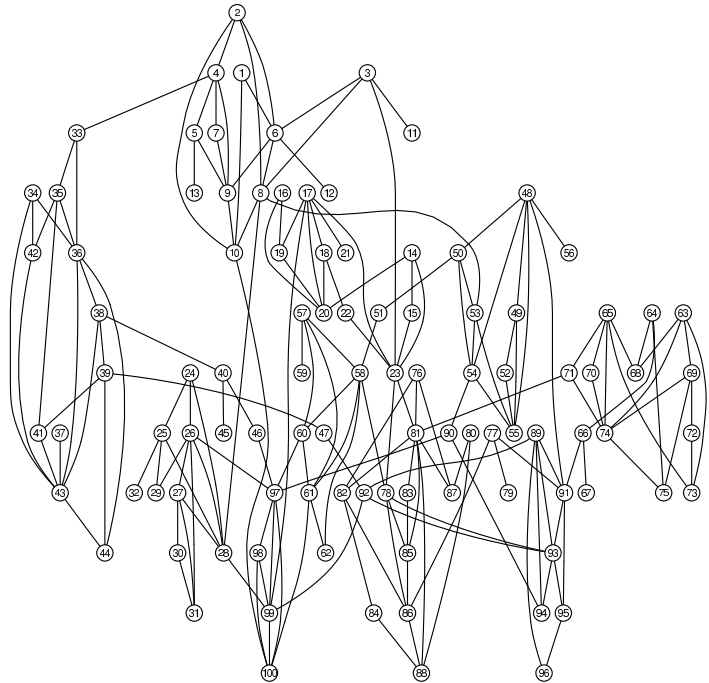}
  \end{center}
  \caption{Random graph used in the evaluation of the pruning procedure.}
  \label{fig:randg}
\end{figure}

We evaluated run-time for full enumeration, the exact branch-and-bound
algorithm based on Lemma~\ref{lem:neighBound} (Section
\ref{sec:exact}), and the approximate algorithm based on the Euclidean
norm (Section~\ref{sec:euclidean}). We also examined run-time on data
with permuted class labels, as the subgraph discovery procedure is to
be run on such data to evaluate the number of false positives and
adjust for multiple testing. Averaging over $20$ runs, the full
enumeration procedure took $732\pm 9$ seconds per run and the exact
branch-and-bound $627\pm 59$ seconds on the non-permuted data and
$578\pm 100$ seconds on permuted data.  Over $100$ runs, the
approximation at $\theta=0.5$ ($\lambda_{min}=0.52$) took $204\pm 86$
seconds ($129\pm 91$ on permuted data) and the approximation at
$\theta=1$ ($\lambda_{min}=1.04$) took $183\pm 106$ seconds ($40\pm
60$ on permuted data). The latter approximation missed the
non-homogeneous subgraph in $5\%$ of the runs.

\OMIT{%
Table~\ref{tab:bbsyn} shows the run-times for the approximation
approach at various thresholding levels, and corresponding values of
the lower bound on the smallest covariance eigenvalue for false
negatives.
\begin{table}[t]
\caption{Average run-time of testing all the subgraphs on synthetic
  data at various levels.}
\label{tab:bbsyn}
\begin{center}
\begin{tabular}{lccccc}
$\lambda_{min}$ & $0.69$ & $1$ & $1.1$ & $1.25$ & $1.5$
\\ \hline \\
$\theta$ & $0.5$ & $0.72$ & $0.79$ & $0.9$ & $1.08$\\
Run-time         & $204\pm 86$ & $121\pm 72$ & $89\pm 61$ & $54\pm 41$
& $39\pm34$\\
Run-time Perm.   & $129\pm 91$ & $43\pm 57$ & $15\pm 29$ & $6\pm 13$ & $4\pm 5$\\
\end{tabular}
\end{center}
\end{table}
}

While neither the exact nor the approximate bounds are efficient
enough to allow systematic testing on huge graphs for which the exact
approach would be impossible, they allow a significant gain in speed,
especially for permuted data, and will thus prove to be very useful
for multiple testing adjustment.

\subsection{Breast cancer expression data}

\todo{*** CHECK: KEGG version.}

We also validated our methods using the microarray dataset
of~\cite{Loi2008Predicting}, which comprises
expression measures for $15,737$ genes in $255$ patients treated with tamoxifen.  Using distant metastasis free survival as a primary endpoint, $68$ patients are labeled as resistant to tamoxifen and $187$ are labeled as sensitive to tamoxifen. 
Our goal was to detect structured groups of genes which are
differentially expressed between resistant and sensitive patients. 

\todo{report results of univariate DE analysis for comparison ?}

We first tested individually $323$ connected components from $89$ KEGG
pathways corresponding to known gene regulation networks, using the
classical Hotelling $T^2$-test and the $T^2$-test in the graph-Fourier
space retaining only the first $20\%$ Fourier coefficients ($k=0.2 p$). For each of the
$323$ graphs, (unadjusted) $p$-values were computed under the nominal
$F$-distributions $\cF(p, n_1+n_2-p-1)$ and $\cF(k, n_1+n_2-k-1)$,
respectively. Figure~\ref{fig:path} shows the pathway for which the
ratio of graph-Fourier to full space $p$-values is the lowest
(\emph{i.e.}, most significant for graph-structured test relative to
classical test) and the pathway for which it is the highest. As
expected, the former corresponds to a shift which appears to be
coherent with the network (even on edges corresponding to inhibition),
while the latter is a small network with non-smooth shift. More
generally, the classical approach tends to select very small
networks. The coherent pathway selected by our graph-structured test
corresponds to \emph{Leukocyte transendothelial migration}. To the
best of our knowledge, this pathway is not specifically known to be
involved in tamoxifen resistance. However, its role in resistance is
plausible, as leukocyte infiltration was recently found to be involved
in breast tumor invasion~\citep{Man2010Aberrant}; more generally, the
immune system and inflammatory response are closely-related to the
evolution of cancer.

\begin{figure}[h]
  \begin{center}
      \includegraphics[width=.8\linewidth]{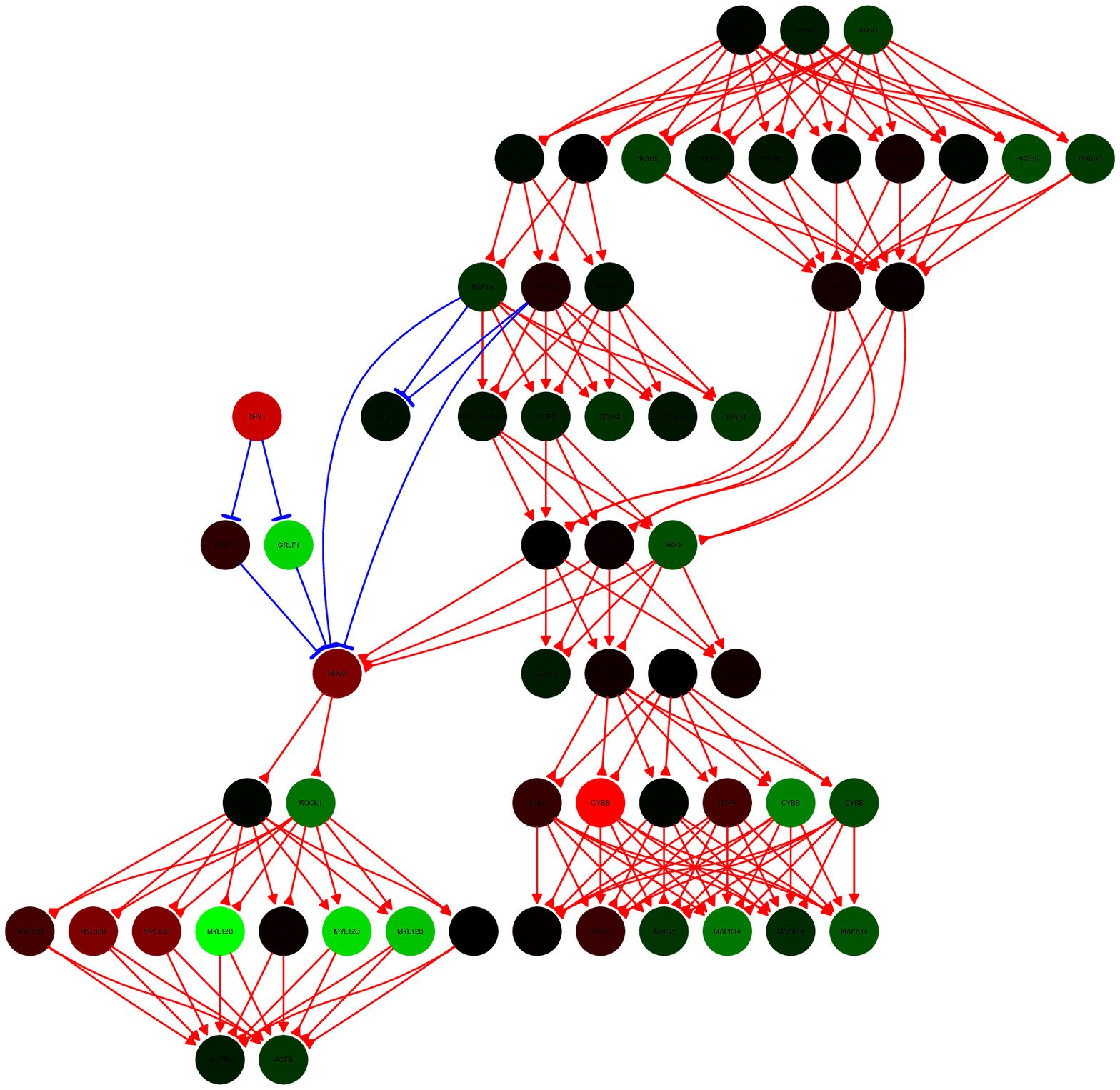}
      \includegraphics[width=.3\linewidth]{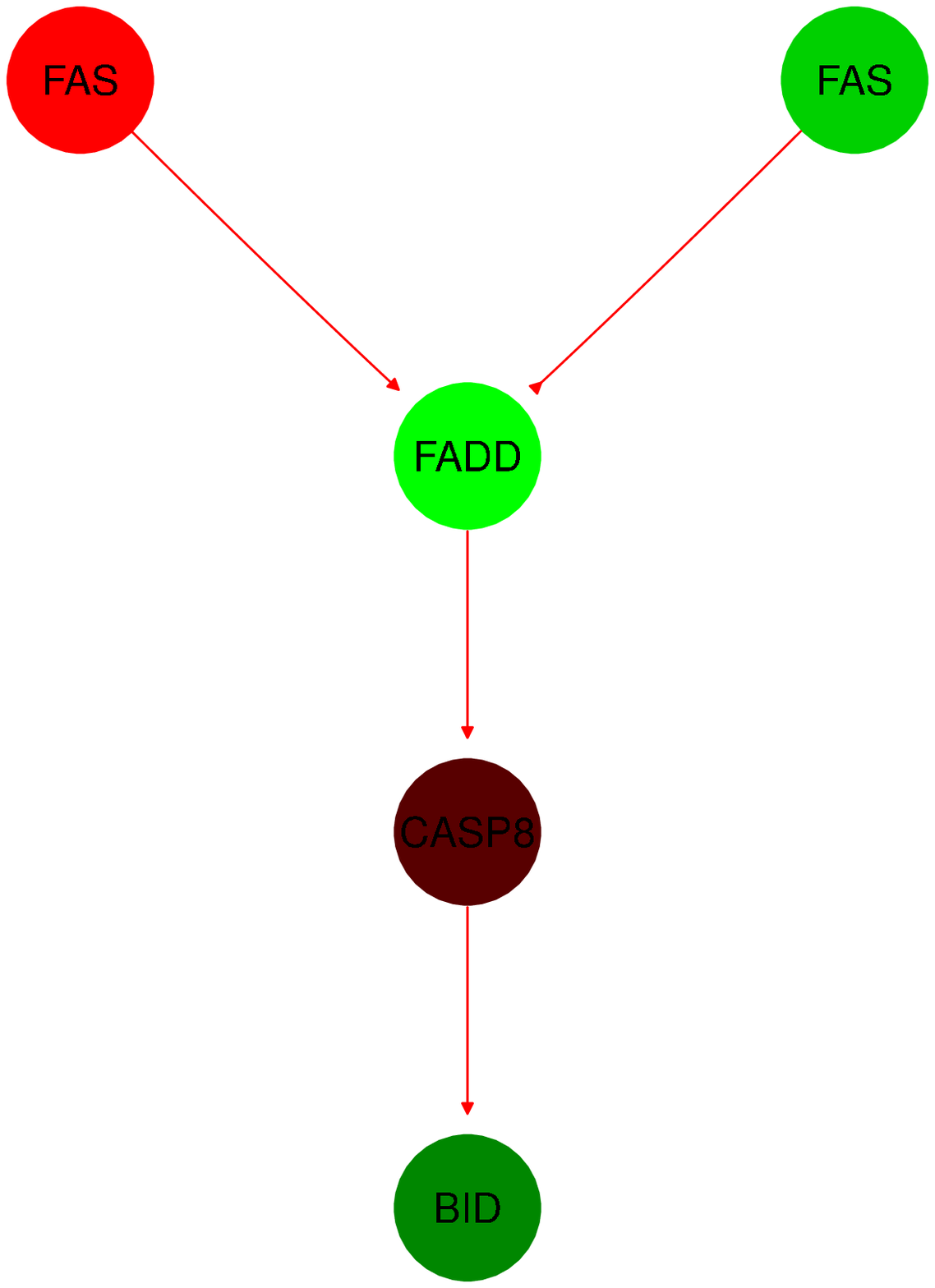}
  \end{center}
  \caption{Difference in sample mean expression measures between
    tamoxifen-resistant and sensitive patients, for genes in two KEGG
    regulation networks. Top: Regulation network (Leukocyte
    transendothelial migration) with the lowest ratio of graph-Fourier
    to full space $p$-values. Bottom: Regulation network (Alzheimer's
    disease) with the highest ratio of graph-Fourier to full space
    $p$-values. Nodes are colored according to the value of the
    difference in means, with green corresponding to high positive
    values, red to high negative values, and black to $0$. Red arrows
    denote activations, blue arrows inhibition.}
  \label{fig:path}
\end{figure}
\todo{*** CHECK: Add details: Exact value of $k$, $q$, $\alpha$,
  number of permutations.  Squared norm? Check calc of
  $\lambda_{min}$.  $\chi^2_5 \simeq 0.5$? $\alpha=2.10^{-4}$?}  

We then ran our branch-and-bound non-homogeneous subgraph discovery
procedure on the cell cycle pathway, which, after restriction to edges
of known sign (inhibition or activation), has $86$ nodes and $442$
edges. Specifically, we sought to detect differentially expressed
subgraphs of size $q=5$, after pre-selecting those for which the
squared Euclidean norm of the empirical shift exceeded $\theta=0.1$;
for a test in the first $k=3$ Fourier components at level
$\alpha=10^{-4}$, this corresponded to $\lambda_{min}<0.23$ and to an
expected removal of $95\%$ of the subgraphs under the approximation
that the squared Euclidean norm of the subgraphs follows a
$\chi^2_5$-distribution. 

For $\alpha=10^{-4}$, none of the $50$ runs on permuted data gave any
positive subgraph and $31$ overlapping subgraphs
(Figure~\ref{fig:firstGraphSign}) were detected on the original data,
corresponding to a connected subnetwork of $22$ genes. Some of these
genes have large individual differential expression, namely TP53 whose
mutation has been long-known to be involved in tamoxifen resistance
\citep{Andersson2005Worse,Fernandez-Cuesta2010p53}. E2F1, whose
expression level was recently shown to be involved in tamoxifen
resistance \citep{Louie2010Estrogen}, is also part of the identified
network, as well as CCND1
\citep{Barnes1997Cyclin,Musgrove2009Biological}. Some other genes in
the network have quite low $t$-statistics and would not have been
detected individually. This is the case of CCNE1 and CDK2, which were
also described in \citep{Louie2010Estrogen} as part of the same
mechanism as E2F1. Similarly, CDKN1A was recently found to be involved
in anti-\oe strogene treatment resistance
\citep{Musgrove2009Biological} and in ovarian cancer which is also a
hormone-dependent cancer \citep{Cunningham2009Cell}. The networks also
contains RB1, a tumor suppressor whose expression or loss is known to
be correlated to tamoxifen resistance
\citep{Musgrove2009Biological}. RB1 is inhibited by CDK4, whose
inhibition has been described in \cite{Sutherland2009CDK} as acting
synergistically with tamoxifen or trastuzumab. More generally, a large
part of the network displayed on Figure~2A of
\cite{Musgrove2009Biological} is included in our network, along with
other known actors of tamoxifen resistance. Our system-based approach
to pathway discovery therefore directly identifies a set of
interacting important genes and may therefore prove to be more
efficient than iterative individual identification of single actors.

\OMIT{
  Note that for $\theta=0.04$ ($\lambda_{min}=0.09$, $\theta$
  corresponding to $\approxeq 50\%$ of $\chi^2_5$), $25,000$ graphs
  are already pre-selected on permuted data, none of which reaches the
  critical level.  For $\alpha=10^{-4}$, none of the $50$ runs on
  permuted data gave any positive subgraph, and two overlapping
  subgraphs (Figure~\ref{fig:firstsg}) were detected on the original
  data, with one gene long known to be involved in tamoxifen resistant
  (TP53, in \cite{Andersson2005Worse}), one very recently discovered
  (E2F1, in \cite{Louie2010Estrogen}) and in the middle of the
  subgraph, some less known (CDKNA1-2) with quite low individual
  $t$-scores, which were recently found to be involved in ovarian
  cancer which is also an hormone-dependent cancer
  \citep{Cunningham2009Cell}. For $\alpha=2.10^{-4}$, only two of $50$
  permuted runs selected $2$ subgraphs each, and $15$ subgraphs, whose
  union actually forms a unique connected component were detected as
  significant (Figure~\ref{fig:secsg}).
}%

\begin{figure}[h]
  \begin{center}
      \includegraphics[width=.8\linewidth]{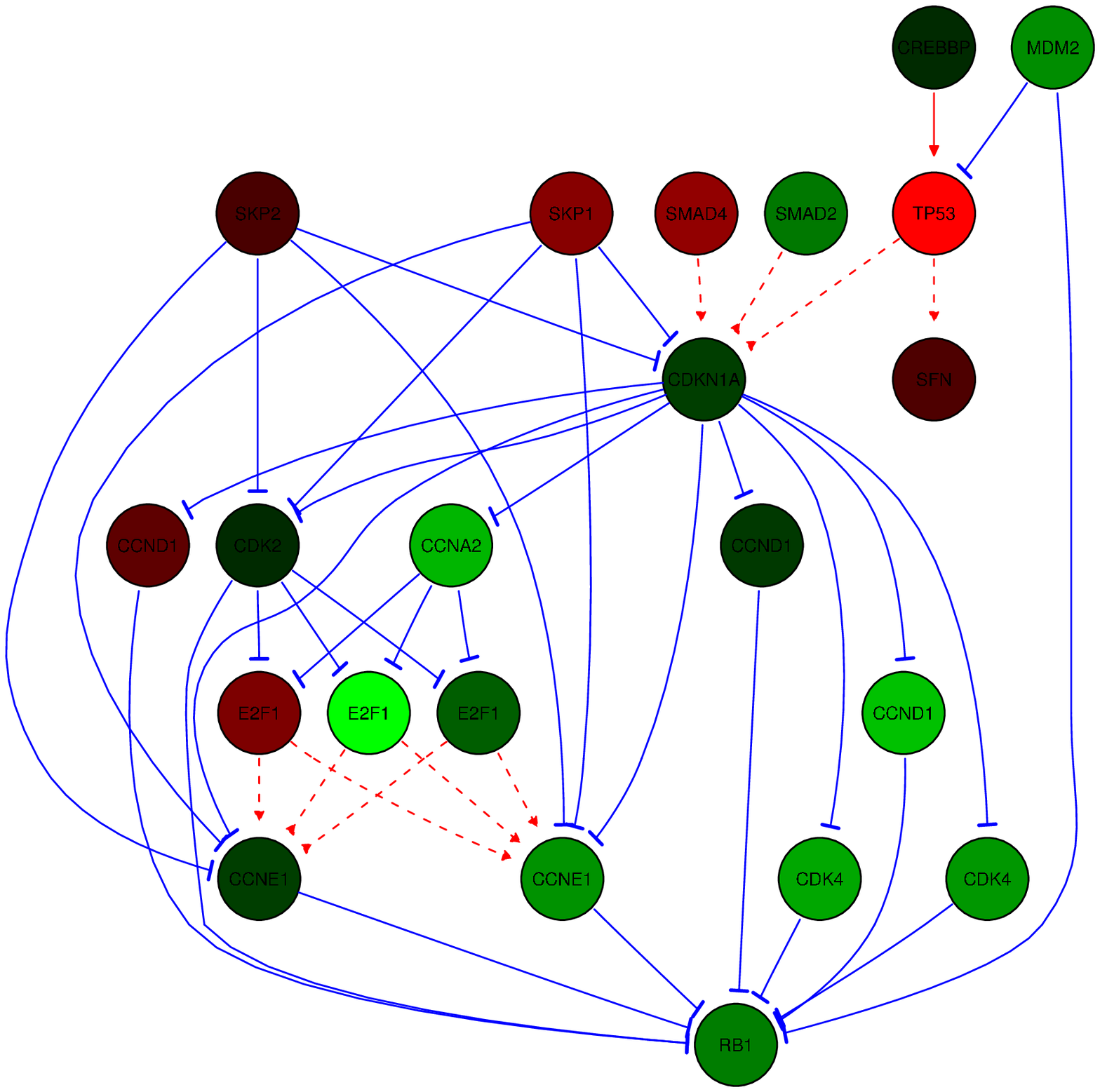}
  \end{center}
  \caption{Difference in sample mean expression measures between
    tamoxifen-resistant and sensitive patients, for genes in the two
    overlapping subgraphs detected at $\alpha=10^{-4}$. Nodes are
    colored according to the value of the difference in means, with
    green corresponding to high positive values, red to high negative
    values, and black to $0$. Red arrows denote activations, blue
    arrows inhibition.}
  \label{fig:firstGraphSign}
\end{figure}


\section{Discussion}
\label{sec:discussion}

We developed a graph-structured two-sample test of means, for problems
in which the distribution shift is assumed to be smooth on a given
graph. We proved quantitative results on power gains for such
smooth-shift alternatives and devised branch-and-bound algorithms to
systematically apply our test to all the subgraphs of a large
graph. The first algorithm is exact and reduces the number of
explicitly tested subgraphs. The second is approximate, with no false
positives and a quantitative result on the type of false negatives
(with respect to the exact algorithm). The non-homogeneous subgraph
discovery method involves performing a larger number of tests, with
highly-dependent test statistics.  However, as the actual number of
tested hypotheses is unknown, standard multiple testing procedures are
not directly applicable. Instead, we use a permutation procedure to
estimate the distribution of the number of false positive
subgraphs. Such resampling procedures (bootstrap or permutation) are
feasible due to the manageable run-time of the pruning algorithms of
Section \ref{sec:discovery}. Results on synthetic data illustrate the
good power properties of our graph-structured test under smooth-shift
alternatives, as well as the good performance of our
branch-and-bound-like algorithms for subgraph discovery. Very
promising results are also obtained on the drug resistance microarray
dataset of \cite{Loi2008Predicting}.

Future work should investigate the use of other bases, such as
graph-wavelets~\citep{Hammond2009Wavelets}, which would allow the
detection of shifts with spatially-located non-smoothness, for
example, to take into account errors in existing networks. More
systematic procedures for cutoff selection should also be considered,
\emph{e.g.}, two-step method proposed in~\cite{Das1974Power} or
adaptive approaches as in~\cite{Fan1998Test}. The pruning algorithm
would naturally benefit from sharper bounds. Such bounds could be
obtained by controlling the condition number of all covariance
matrices, using, for example, regularized statistics which still have
known non-asymptotic distributions, such
as those of~\cite{Tai2008On}. Concerning multiple testing, procedures should be
devised to exploit the dependence structure between the tested
subgraphs and to deal with the unknown number of tests. The proposed
approach could also be enriched to take into account different types
of data, \emph{e.g.}, copy number for the detection of DE gene
pathways. More subtle notions of smoothness, \emph{e.g.}, ``and'' and
``or'' logical relations~\citep{Vaske2010Inference}, could also be
included. An interesting alternative application would be to explore
the list of pathways which are known to be differentially expressed
(or detected by the classical $T^2$-test), but which are not detected
by the graph-Fourier approach, to infer possible mis-annotation in the
network. Other applications of two-sample tests with smooth-shift on a
graph include fMRI and eQTL association studies.

Finally, it would be of interest to compare our testing approach with
structured sparse learning, for the purpose of identifying expression
signatures that are predictive of drug resistance.  Methods should be
compared in terms of prediction accuracy and stability of the selected
genes across different datasets, a central and difficult problem in
the design of such
signatures~\citep{Ein-Dor2005Outcome,He2010Stable,Haury2010Increasing}. The
comparison should also take into account the merits of the
sparsity-inducing norm over the hypothesis testing-based selection, as
well as the influence of the smoothness assumption. The latter could
indeed also be integrated in a sparsity-inducing penalty by applying,
\emph{e.g.}, \cite{Jacob2009Group} \todo{*** CHECK: Unclear}
to the reduced graph-Fourier representation of the pathways, yielding
a special case of multiple kernel learning~\citep{Bach2004Multiple}.

\subsubsection*{Acknowledgments}

The authors thank Za\"{i}d Harchaoui, Nourredine El Karoui, and Terry
Speed for very helpful discussions and suggestions, and the UC
Berkeley Center for Computational Biology Genentech Innovation
Fellowship and The Cancer Genome Atlas Project for funding.

\bibliographystyle{plainnat}

\end{document}